
\documentclass[conference]{IEEEtran}
\ifCLASSINFOpdf
\else
\fi

\usepackage{algorithm}
\usepackage{float}
\usepackage{footmisc}
\usepackage{subfigure} 
\usepackage{array}
\usepackage{enumitem}
\usepackage{verbatim}
\usepackage[table,xcdraw]{xcolor}
\usepackage{tikz}
\usetikzlibrary{calc}
\usetikzlibrary{shapes,arrows}
\tikzstyle{decision} = [diamond, draw, fill=blue!20,
text width=4.5em, text badly centered, node distance=3cm, inner sep=0pt]
\tikzstyle{block} = [rectangle, draw, fill=blue!20,
text width=5em, text centered, rounded corners, minimum height=4em]
\tikzstyle{line} = [draw, -latex']
\tikzstyle{cloud} = [draw, ellipse,fill=red!20, node distance=3cm,
minimum height=2em]
\definecolor{initstepcolor}{RGB}{204, 255, 204}
\definecolor{dynamicinteractionscolor}{RGB}{255, 229, 204}
\definecolor{decisioncolor}{RGB}{255, 179, 179}
\definecolor{boxcolor}{RGB}{198, 235, 251}

\usepackage{varwidth}
\usepackage[noend]{algpseudocode}
\usepackage[normalem]{ulem}
\usepackage[normalem]{ulem}
\usepackage{multirow}
\usepackage{mathtools}
\usepackage{amsthm}
\usepackage{soul}

\newtheorem{mydef}{Definition}
\makeatletter

\makeatother

\newtheorem{theorem}{Theorem}[section]
\newtheorem{lemma}[theorem]{Lemma}

\newcommand{\squishlist}{
	\begin{list}{$\bullet$}
		{ \setlength{\itemsep}{0pt}
			\setlength{\parsep}{3pt}
			\setlength{\topsep}{3pt}
			\setlength{\partopsep}{0pt}
			\setlength{\leftmargin}{1.5em}
			\setlength{\labelwidth}{1em}
			\setlength{\labelsep}{0.5em} } }
	
	\newcommand{\squishlisttwo}{
		\begin{list}{$\bullet$}
			{ \setlength{\itemsep}{0pt}
				\setlength{\parsep}{0pt}
				\setlength{\topsep}{0pt}
				\setlength{\partopsep}{0pt}
				\setlength{\leftmargin}{2em}
				\setlength{\labelwidth}{1.5em}
				\setlength{\labelsep}{0.5em} } }
		
		\newcommand{\squishend}{
		\end{list}  }

		\pdfpagewidth 8.5in \pdfpageheight 11in
		
		\newfont{\mycrnotice}{ptmr8t at 7pt}
		\newfont{\myconfname}{ptmri8t at 7pt}
		%
		%

%


\hyphenation{op-tical net-works semi-conduc-tor}

\setlength{\intextsep}{5pt}

\begin{document}
%
\title{MRAttractor: Detecting Communities from Large-Scale Graphs}

\author{\IEEEauthorblockN{Nguyen Vo, Kyumin Lee, Thanh Tran}
	\IEEEauthorblockA{Computer Science Department, Worcester Polytechnic Institute \\ Worcester, MA, USA\\
		Email: \{nkvo,kmlee,tdtran\}@wpi.edu}
	
}


%


\maketitle

\begin{abstract}
	Detecting groups of users, who have similar opinions, interests, or social behavior, has become an important task for many applications. A recent study showed that dynamic distance based Attractor, a community detection algorithm, outperformed other community detection algorithms such as Spectral clustering, Louvain and Infomap, achieving higher Normalized Mutual Information (NMI) and Adjusted Rand Index (ARI). However, Attractor often takes long time to detect communities, requiring many iterations. To overcome the drawback and handle large-scale graphs, in this paper we propose MRAttractor, an advanced version of Attractor to be runnable on a MapReduce framework. In particular, we (i) apply a sliding window technique to reduce the running time, keeping the same community detection quality; (ii) design and implement the Attractor algorithm for a MapReduce framework; and (iii) evaluate MRAttractor's performance on synthetic and real-world datasets. Experimental results show that our algorithm significantly reduced running time and was able to handle large-scale graphs.
\end{abstract}
	

%
\IEEEpeerreviewmaketitle

\section{Introduction}
\label{sec:intro}
A community can be viewed as a group of vertices which are densely connected, when compared to the rest of a network. Detecting organizational groups of these vertices paves the way for understanding the underlying structure of complex networks. As a result, there are numerous algorithms proposed in the last decade such as \cite{newman2004detecting, von2007tutorial, blondel2008fast,rosvall2008maps}. Recently, Shao et al. \cite{kdd.dyna} proposed an algorithm called Attractor which utilized the viewpoint of dynamic distance of linked nodes in a graph to find high-quality communities. Instead of optimizing a specific objective, Attractor relied on three types of interactions to dynamically change the distance between vertices. Despite of its superiority over multiple baselines such as Louvain, Ncut and Infomap, this algorithm takes many iterations to converge edge distances, or it even may not converge in some cases \cite{meng2016improved}.

Furthermore, there has been growing interest to process large-scale graphs such as online social networks (e.g., Facebook friendship network and Twitter follower network), and to extract communities. However, most existing community detection algorithms such as Attractor was not be able to handle the large graphs or was designed for a single machine.

In this paper, we were interested in improving Attractor, so that it can handle large-scale graphs,  producing high quality communities as quick as possible with ensuring edge distances converged. Initially, we considered to design, improve, and implement Attractor on well-known graph processing frameworks such as Graphx, Pregel and Pegasus \cite{gonzalez2014graphx,malewicz2010pregel,kyrola2012graphchi,kang2009pegasus,gonzalez2012powergraph}. But these graph processing frameworks are not well suited for communication between unconnected nodes \cite{gonzalez2014graphx} because they share a similar vertex-centric paradigm where only connected vertices can directly communicate with each other. Therefore, we decided to design and implement Attractor on top of a well-known MapReduce framework, Hadoop system \cite{white2012hadoop} which can take advantage of distributed computing power.

However, we faced the following key challenges when we began designing our \textsc{MRAttractor}, an advanced version of Attractor for Hadoop: (i) how to compute dynamic interactions in a distributed computing environment, where a partial graph was loaded to each slave node; (ii) how to force edge distances to converge with minimum overhead of network communication and disk I/O, when edge distances are fluctuated over time or convergence takes long time in some datasets \cite{meng2016improved}; and (iii) how to mitigate the skewness issue of parallel computing (i.e., a task in a slave node takes longer running time than tasks in the other slave nodes). Especially, researchers observed that the original Attractor takes long time in some datasets due to fluctuated edge distances \cite{meng2016improved}. If we just implement Attractor for Hadoop, we will face the same problem which will cause large overhead in the Hadoop system.

By overcoming and resolving these challenges, in this paper, we propose \textsc{MRAttractor} which consists of three main components: (i) Jaccard distance initialization; (ii) dynamic interaction computation by proposing our graph partition algorithms and applying a sliding window technique; and (iii) community extraction.

Our contributions in this paper are as follows:
\squishlist
\item We applied a sliding window technique to ensure edges converged, reduce running time, and still achieve the same quality of extracted communities compared with Attractor. 
\item We designed and implemented \textsc{MRAttractor}, an improved version of Attractor, to reduce running time and handle large-scale graphs. 
\item We evaluated performance of \textsc{MRAttractor} in both synthetic and real-life datasets. Our results showed that \textsc{MRAttractor} was able to handle large-scale graphs, and significantly reduced running time. 
\item We publicly shared our source code and datasets available at http://bit.ly/mrattractor for the research community.
\squishend
\section{Related work}
\label{sec:related_work}

Community detection has been studied for a long time to unveil hierarchical structure and hidden modules of complex networks. To detect communities, many different algorithms were proposed \cite{girvan2002community, fortunato2016community} categorized into (i) statistical inference based methods \cite{karrer2011stochastic}, (ii) optimization-based methods in which they are often designed to optimize a specific objective such as modularity \cite{blondel2008fast}, normalized cut \cite{von2007tutorial} and betweenness \cite{girvan2002community}, and (iii) dynamical processes based algorithms \cite{rosvall2008maps}. To complement the existing approaches, \cite{kdd.dyna} recently proposed an algorithm called \emph{Attractor}, which is based on dynamic distance between linked nodes. This algorithm has been investigated and extended in \cite{meng2016improved, vorevealing}. Despite \emph{Attractor}'s high precision, it was less efficient, requiring many iterations to converge \cite{meng2016improved}.


Other researchers focused on detecting communities from large-scale graphs. One direction was to design and implement algorithms for a MapReduce framework. Tsironis et al. \cite{tsironis2013accurate} proposed a MapReduce spectral clustering algorithm by employing eigensolver and parallel k-means algorithm \cite{zhao2009parallel}. Louvain \cite{louvainMR}, and community detection algorithms based on Label propagation \cite{lpmr} or propinquity dynamics \cite{zhang2009parallel} were developed for Hadoop. Another direction was to design and implement community detection algorithms for other frameworks. For example, \cite{saltz2015distributed, ling2016fast} developed community detection algorithms based on vertex-centric paradigm of Pregel \cite{malewicz2010pregel}. In particular, Saltz et al. \cite{saltz2015distributed} developed an algorithm to optimize Weighted Community Clustering metric. Ling et al. \cite{ling2016fast} proposed modularity-based algorithm called FastCD on top of GraphX. Another work \cite{wickramaarachchi2014fast} employed PMETIS to parallelize the first iteration of Louvain algorithm.



\section{Background: Attractor}
\label{sec:attractor}
In this section, we briefly summarize how Attractor \cite{kdd.dyna} works as the background knowledge so that readers can follow how our MRAttractor works in the following section. Table~\ref{tbl:tableNotation} presents frequently used notations in the rest of this paper.

\begin{table}[t]
	\centering
	\resizebox{0.5\textwidth}{!}
		{
	\begin{tabular}{|l|p{6cm}|}
		\hline
		Notations                                                                     & Meaning                                                                                                                                             \\ \hline
		$G=(V,E)$                                                                     & The undirected graph inputted to \textsc{MRAttractor}                                                                                                           \\
		$n,m$                                                                         & $n=|V|$ and $m=|E|$ of the input graph                                                                                                              \\
		$d(u,v) $                                                                     & Distance of edge $(u,v)$                                                                                                                            \\
		$\Phi(u) $                                                                    & $u$'s neighbors, \small$\Phi(u) = \{v|v \in V, (u,v) \in E \}$                                                                                      \\
		$\Gamma(u)$                                                                   & \begin{tabular}[c]{@{}l@{}}$u$'s neighbors and associated distances\\ $\Gamma(u)=\{(v,{d}(u,v))|u,v \in V,(u,v) \in E \}$\end{tabular}              \\
		$(u, \Gamma(u))$                                                              & The star graph with center vertex $u$.                                                                                                              \\
		$deg(u)$                                                                      & degree of vertex $u$, $deg(u)=|\Phi(u)|=|\Gamma(u)|$                                                                                                \\
		$\big \langle k;v \big \rangle$                                               & A key-value pair where $k$ is key and $v$ is value.                                                                                                 \\
		\begin{tabular}[c]{@{}l@{}}$DI(u,v)$, $CI(u,v)$ \\ and $EI(u,v)$\end{tabular} & \begin{tabular}[c]{@{}l@{}}Direct, common and exclusive interaction \\ between linked nodes $u$ and $v$ respectively\end{tabular}       \\
		$P(\cdot)$	 & Hash function for graph partition\\														
		$\bigtriangleup(u,v,c)$                                                       & Triangle, three edges $(u,v), (u,c), (v,c) \in E$.                                                                                             \\
		$\wedge(u,v,x)$    & Wedge, where $(u,v), (v,x) \in E$, $(u,x)\notin E$.                                                                                                        \\
		$p$                                                                           & The number of partitions of graph $G(V,E)$.                                                                                                         \\
		$\textbf{w}$                                                                  & the sliding window vector of an edge $(u,v)$                                                                                                        \\
		$\lambda$, $\tau$ and $\gamma$    & Cohesive parameter, threshold of sliding window, upper-bound of non-converged edges respectively. \\ \hline
	\end{tabular}
}
		\vspace{2pt}
		\caption{The notations used in this paper}
		\label{tbl:tableNotation}
		\vspace{-25pt}
\end{table}

Attractor consists of three main steps. Firstly, it initializes Jaccard distance of directly linked nodes as follows:
\begin{equation}
d(u,v)=1 - {|N(u) \cap N(v)|}/{|N(u)  \cup  N(v)|}
\label{eq:init-jaccardsim}
\end{equation}
, where $N(u)=\Phi(u)+\{u\}$ and $N(v)=\Phi(v) + \{v\}$. $\Phi(u)$ and $\Phi(v)$ are $u$'s neighbors and $v$'s neighbors respectively.

Secondly, it dynamically changes edges' distance by computing direct linked interaction (DI), common interaction (CI) and exclusive interaction (EI). These interactions are called Dynamic Interactions. The idea behind dynamic interactions is that the more a pair of vertices interacts with each other, the more their distance is reduced (i.e., they attract each other).

$DI(u,v)$ measures the direct influence of linked nodes and is defined based on $sin()$, the sine function, as follows:
\begin{equation}
\small
DI(u,v) = {sin(1-d(u,v)) /  deg(u)} + {sin(1-d(u,v)) /  deg(v)} \label{eq:DI-1}
\end{equation}

$CI(u,v)$ measures influence from common neighbors $c$ of $u$ and $v$, denoted as $CN(u,v) = \Phi(u) \cap \Phi(v)$. Its main concept is if each $c\in CN(u,v)$ has a small $d(c,u)$ and small $d(c,v)$, \emph{u} and \emph{v} will be likely to be in a group.
\begin{equation}
\small
CI(u,v)=  \sum_{\mathclap{c \in CN(u,v)}} CI_c(u,v)
\end{equation}
where $CI_c(u,v)$ is equal to following expression:
\begin{equation}
\label{eq:CI}
\resizebox{1.05\hsize}{!}{
	$ {{(1-d(v,c)) \cdot sin(1-d(u,c))} \over deg(u)} +
	{{(1-d(u,c)) \cdot sin(1-d(v,c)) } \over deg(v)} $
}
\end{equation}


$EI(u,v)$ measures influence from exclusive neighbors. Its main concept is that each exclusive neighbor $x$ of $v$ attracts $v$ to move toward $x$. If $x$ and $u$ has high similarity, the movement of $v$ to $x$ will reduce $d(u,v)$. Otherwise, the distance will increase. The same concept applies to each exclusive neighbor $y$ of $u$. EI of $u$ and $v$ is measured as follows:
\begin{equation}
EI(u,v)=\sum_{\mathclap{x\in EN(v)}} { EI_x(u,v) } +\sum_{\mathclap{y\in EN(u)}} { EI_y(u,v)} \label{eq:EI}
\end{equation}
, where EN($v$) and EN($u$) are sets of exclusive neighbors of $v$ and $u$ respectively. $EN(v)=\Phi(v) -
(\Phi(u)\cap\Phi(v))$ and $EN(u)=\Phi(u) - (\Phi(u)\cap\Phi(v))$. $EI_x(u,v)$ is defined below:
\begin{equation}\label{eq:EI-partial}
EI_{x}(u,v)={ {\rho(x,u) \cdot sin(1-{d}(v,x))} / deg(v)}
\end{equation}
, where $\rho(x,u)$ is influence of vertex $x$ on $d(u,v)$. Given cohesive parameter $\lambda \in [0;1]$, $\rho(x,u)$ is computed based on $\vartheta(x,u)$, the similarity of unconnected nodes $x$ and $u$:
\begin{equation}\label{eq:rho}
\rho(x, u) =
\begin{cases}
\vartheta(x, u), 			& 		\text{if}\ \vartheta(x, u) \geq \lambda \\
\vartheta(x, u) - \lambda, & \text{otherwise}
\end{cases}
\end{equation}

We measure the similarity of $x$ and $u$, $\vartheta(x,u)$, as follows:
\begin{equation}\label{eq:virtual}
\resizebox{0.9\hsize}{!}{
	$\vartheta(x,u) = {\sum_{c \in CN(x,u)}(1-d(x,c) + 1-d(u,c))  \over { \sum_{ k\in \Phi(x)}(1-d(x,k)) + \sum_{l\in \Phi(u)} (1-d(u,l))   } } $
}
\end{equation}





After computing DI, CI, EI for each edge $(u,v)\in E$, new distance $d(u,v)$ at timestamp $t+1$ is updated as follows:
\begin{equation}
\label{eq:update_equation}
d^{t+1}(u,v) = d^t(u,v) - DI(u,v) - CI(u,v) - EI(u,v)
\end{equation}

Attractor algorithm is looped until every edge distance converged (e.g., its distance becomes either 0 or 1). Thirdly, Attractor removes edges with distance 1 and finds connected communities with breath first search. Each connected component is an identified community.

\section{MRAttractor}
\label{sec:MRAttractor}
\begin{figure}[t]
	\small
	\begin{tikzpicture}[node distance = 2cm, auto]
	\node [rounded rectangle,draw,fill=initstepcolor,text width=30mm,align=center] (init) {Jaccard Distance Initialization };

	\node [block, below of=init, text width=30mm,minimum height=1em,node distance=1.2cm,fill=dynamicinteractionscolor] (genstargraph) { Generating Star Graphs};
	
	\node [block, right of=genstargraph,minimum height=1em,text width=30mm,node distance = 4.5cm,fill=dynamicinteractionscolor] (dynamicinteraction) {Computing Three Types of Interaction};
	\node [block, below of=dynamicinteraction,text width=30mm,fill=dynamicinteractionscolor,node distance=1.7cm] (updateedge) {Updating Distances and sliding window};
	
	\node [decision, aspect=2, text width=5em, inner sep=-2pt, below of=genstargraph,node distance=1.7cm, rounded corners,fill=dynamicinteractionscolor] (decide) {\#non-converged edges $<\gamma$};
	\node [block, below of=decide,text width=30mm,node distance=2cm,fill=dynamicinteractionscolor] (single) {Processing on the Master node until all edges converged};
	\node [rounded rectangle,fill=boxcolor, right of=single,text width=30mm,align=center,draw,node distance=4.5cm] (stop) {Extracting communities};
	\path [line] (init) -- (genstargraph);
	\path [line] (genstargraph) -- (dynamicinteraction);
	\path [line] (decide) -- node {no} (genstargraph);
	\path [line] (dynamicinteraction) -- (updateedge);
	\path [line] (updateedge) -- (decide);
	\path [line] (decide) -- node {yes}(single);
	\path [line] (single) -- (stop);
	
	\end{tikzpicture}
	\vspace{-5pt}
	\caption{Flowchart of three major components MRAttractor. (1) Jaccard Distance Initialization, (2) Dynamic Interactions and (3) Extracting communities}
	\label{fig:MRAttractor_major_components}
	\vspace{-15pt}
\end{figure}
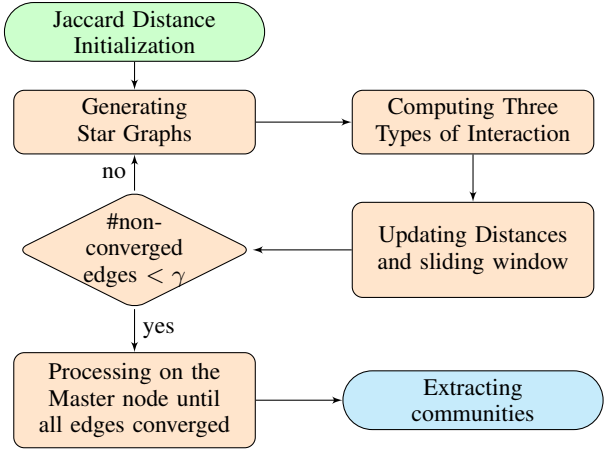
In this section, we describe \emph{MRAttractor}, our proposed distributed version of Attractor. It not only produces the same results with Attractor, but also significantly reduces the running time in both single machine and distributed system. 

MRAttractor consists of three main components. The first one is to initialize Jaccard distance. The second component is to compute dynamic interactions and make edge distance converged, which consists of four phases such as generating star graphs, computing three types of interaction, updating distances, and  running all edge convergence on the master node. The third component is to extract communities. Figure~\ref{fig:MRAttractor_major_components} shows three main components of MRAttractor. We explain each component in detail in the following subsections.


\subsection{Jaccard Distance Initialization}
For each vertex $u$, we find its neighbors $\Phi(u)$ and sort these neighbors increasingly based on their indexes. Then, for each edge $(u,v)$ of graph $G(V,E)$, we can find common neighbors of $u$ and $v$ and compute Jaccard distance (See Eq.\ref{eq:init-jaccardsim}) with complexity O($deg(u)+deg(v)$). 

\subsection{Computing Dynamic Interactions}
After initializing all edge distances of $G(V, E)$, we move on to the second major component of \textsc{MRAttractor} which consists of three MapReduce phases (i.e., generating star graphs, computing three types of Interactions, and updating distances based on sliding window), and running on the master node to make all edge distances converge. 
\begin{algorithm}[H]
	\caption{MR1: Generating Star Graphs}
	\small
	\label{alg:generating_star_graphs}
	\begin{algorithmic}[1]
		\Statex \textbf{Map:} \textbf{Input} $ \big \langle (u,v); d(u,v) \big \rangle $
		
		\State emit $\big \langle {u ; v\;d(u,v)} \big \rangle$ ; emit $\big \langle {v ; u\;d(u,v)} \big \rangle$
		\Statex \textbf{Reduce:} \textbf{Input}  $\big \langle u ; \Gamma(u) \big \rangle$
		\State Sort $\Gamma(u)$ increasingly based on index of $u$'s neighbors.
		\State emit $\big \langle u;\;Sorted(\Gamma(u))\big \rangle$
		
	\end{algorithmic}
\end{algorithm}
\subsubsection{Generating Star Graphs}
Algorithm~\ref{alg:generating_star_graphs} processes each edge $(u,v)$ and its distance. Then, in reduce step, we sort $\Gamma(u)$ based on index of $u$'s neighbors and output its star graph $\big \langle u;\Gamma(u)\big \rangle$. Note that a star graph is a tree of $k$ nodes where center vertex has degree $k-1$, while other vertices have degree 1. Sorting helps us find common and exclusive neighbors of two linked nodes in linear time. Totally, there will be $n=|V|$ star graphs output from reduce instances of Algorithm~\ref{alg:generating_star_graphs}.

\begin{figure*}
	\centering
	\includegraphics[trim=15cm 6.5cm 25.5cm 8cm,clip,width=0.98\textwidth,height=4cm]{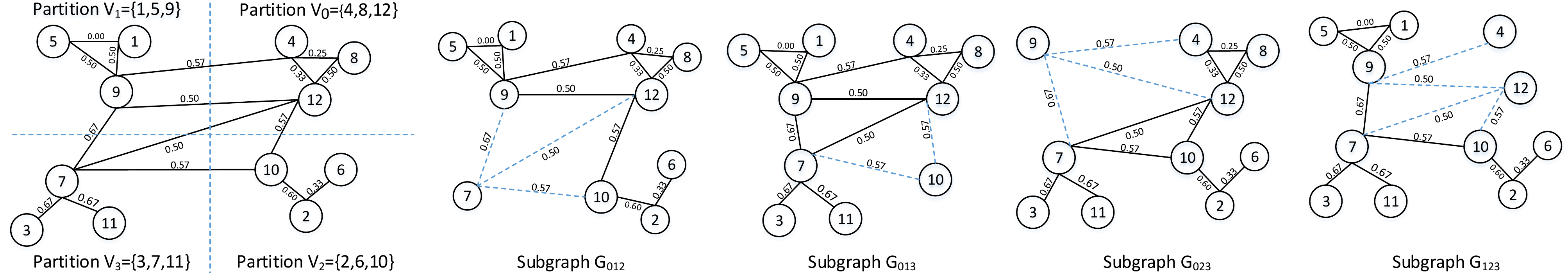}
	\vspace{-10pt}
	\caption{Original graph $G(V,E)$ where $|V|=12$, $|E|=16$. Each edge is associated with Jaccard distance in Equation \ref{eq:init-jaccardsim}. This graph is partitioned into four subgraph $G_{012}$, $G_{013}$, $G_{023}$ and $G_{123}$. The hash function $P(u)=u \mod p$ where $p=4$. The solid lines are main edges and the dot lines are rear edges. }
	\label{fig:big-picture-dynamic-interactions}
	\vspace{-10pt}
\end{figure*}

\subsubsection{Computing Three Types of Interactions}

Direct Interaction (DI), Common Interaction (CI) and Exclusive Interaction (EI) are three interactions we need to compute. The hardest task is computing $EI(u,v)$ of edge $(u,v)$ because $EI(u,v)$ depends on $\vartheta(x,u)$, the similarity of unconnected nodes $x$ and $u$ where $x\in EN(v)$ (see Eq.\ref{eq:EI-partial} and Eq.\ref{eq:rho}). Well-known large-scale graph processing frameworks \cite{gonzalez2014graphx,malewicz2010pregel,kang2009pegasus} limitedly support direct communication between two unconnected nodes (e.g. vertex $x$ and vertex $u$), leading to difficulties in computing $EI(u,v)$. Therefore, we propose \textsc{DecGP}, an algorithm to efficiently compute dynamic interactions of every edge $(u,v)$ in $G(V,E)$. Our proposed \textsc{DecGP} algorithm was inspired by a graph partitioning algorithm introduced in \cite{suri2011counting}, which showed that a graph partition algorithm helped mitigate skewness issue by evenly distributing workload to each reducer. The third challenge mentioned in Section~\ref{sec:intro} will be resolved by our graph partition algorithm.

Our graph partition algorithm uses a hash function $P(\textbf{.})$, that maps each vertex to range $[0,p-1]$, to partition original graph $G(V, E)$ into $p$ disjoint partitions $V_1, V_2,...,V_p$ such that $V=V_1 \cup V_2\cup...\cup V_p$, and $V_i \cap V_j=\emptyset$. From these partitions, we form overlapping subgraphs $G_{ijk}=(V_{ijk}, E_{ijk})$ where $V_{ijk}=V_i\cup V_j\cup V_k$, $E_{ijk}=\{(u,v)\in E| u,v \in V_{ijk} \}$. An edge $(u,v) \in E$ is called an outer-edge if $u$ and $v$ are in different partition (i.e., $P(u) \neq P(v)$). Otherwise, it is an inner-edge. Intuitively, in each smaller-size subgraph $G_{ijk}$, we compute DI, CI and EI of edges $(u,v)\in E_{ijk}$. Let's take an example graph $G(V,E)$ of 12 vertices and 16 edges in Figure \ref{fig:big-picture-dynamic-interactions}. By using hashing function $P(u)=u\mod p$ where $p=4$, this graph is partitioned into 4 subgraphs $G_{012}$, $G_{013}$, $G_{023}$ and $G_{123}$. The value on each edge is Jaccard distance. Although our algorithm shares similar methodology with \cite{suri2011counting}, there are key differences as follows:

\squishlist
\item We reduce complexity of finding subgraphs $G_{ijk}$ that contain an edge $(u,v)$ from O($p^3$) to O($p^2$).
\item In our algorithm, we also include additional edges called \textbf{rear edges} in subgraphs $G_{ijk}$ to compute exclusive interaction while in \cite{suri2011counting}, subgraphs $G_{ijk}$ only contain \textbf{main edges}. In each subgraph $G_{ijk}$, an edge $(u,v)$ is called a main edge if $\{P(u),P(v)\} \in \{i,j,k\}$ otherwise it is called a rear-edge. In Figure~\ref{fig:big-picture-dynamic-interactions}, subgraphs $G_{012}$, $G_{013}$, $G_{023}$ and $G_{123}$ contain rear-edges denoted as dotted edges and main edges denoted as solid edges.
\item In each subgraph $G_{ijk}$, we compute partial values of $DI(u,v)$, $CI(u,v)$ and $EI(u,v)$ of every main edges $(u,v)$ instead of counting the number of triangles like \cite{suri2011counting}.
\squishend

These differences are described as follows:

\noindent\textbf{(i) Reducing complexity of finding subgraphs ${G_{ijk}}$.}

Given an edge $(u,v)$, a graph partition algorithm of Siddharth et al. \cite{suri2011counting} takes $O(p^3)$ for finding subgraphs $G_{ijk}$ that $(u,v)$ belongs to. However, our proposed algorithm only takes $O(p^2)$ for the task (see Lemma \ref{lm:mrgp_www_inner_outer} and associated proof).


\begin{lemma}
	For each edge $(u,v)$ in original graph $G(V,E)$:
	\begin{enumerate}
		\item If edge $(u,v)$ is an inner edge, there will be ${(p-1)\cdot (p-2)} \over 2$ distinct subgraphs $G_{ijk}$ containing it.
		\item If edge $(u,v)$ is an outer edge, there will be $p-2$ subgraphs $G_{ijk}$ containing this edge.
	\end{enumerate}
	\label{lm:mrgp_www_inner_outer}
\end{lemma}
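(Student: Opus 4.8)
The plan is to reduce both statements to an elementary counting problem over the unordered index triples $\{i,j,k\}$. The first thing I would establish is a clean membership criterion. Since $V_{ijk}=V_i\cup V_j\cup V_k$ and each vertex $w$ lies in exactly one partition, namely $V_{P(w)}$, a vertex $w$ belongs to $V_{ijk}$ if and only if $P(w)\in\{i,j,k\}$. Because $E_{ijk}=\{(u,v)\in E\mid u,v\in V_{ijk}\}$, it follows that the subgraph $G_{ijk}$ contains the edge $(u,v)$ precisely when both $P(u)\in\{i,j,k\}$ and $P(v)\in\{i,j,k\}$. Counting the relevant subgraphs therefore amounts to counting the $3$-element subsets of $\{0,1,\dots,p-1\}$ that contain a prescribed set of partition indices.

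For part (1), I would set $a=P(u)=P(v)$, which is a single fixed index because $(u,v)$ is an inner edge. The containment condition becomes $a\in\{i,j,k\}$, so one of the three slots is forced to equal $a$ and the remaining two indices are any distinct choices from the other $p-1$ indices. This yields $\binom{p-1}{2}=\frac{(p-1)(p-2)}{2}$ subgraphs, and distinctness is automatic since each unordered triple $\{i,j,k\}$ indexes exactly one subgraph. For part (2), I would set $a=P(u)$ and $b=P(v)$ with $a\neq b$, which holds because $(u,v)$ is an outer edge; now the condition $\{a,b\}\subseteq\{i,j,k\}$ fixes two of the three slots, and the third index is a free choice among the remaining $p-2$ indices, giving exactly $p-2$ subgraphs.

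The argument is essentially complete once the membership criterion is in place, so there is no deep obstacle here; the only genuine care needed is to count \emph{unordered} triples rather than ordered ones, so as not to overcount the inner-edge case by a permutation factor. I would also briefly record the bookkeeping for small $p$ — an outer edge can appear in a subgraph only when $p\ge 2$, and the triples themselves exist only when $p\ge 3$ — to confirm that both formulas are well defined in the regime where the partitioning scheme is actually applied.
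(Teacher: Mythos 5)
Your proposal is correct and follows essentially the same route as the paper's own proof: fix the partition index (or indices) forced by the edge's endpoints and count the remaining free choices of indices, giving $\binom{p-1}{2}=\frac{(p-1)(p-2)}{2}$ in the inner-edge case and $p-2$ in the outer-edge case. Your version is slightly more careful in stating the membership criterion explicitly and in flagging the unordered-triple counting, but the underlying argument is identical.
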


\begin{proof}
	(1) If edge $(u,v)$ is an inner edge (e.g., $P(u)=P(v)=i$), it will be emitted to $G_{ijk}$ for each $j\in [0;p-1],\,k\in [0;p-1]$ and $j \neq i$, $i\neq k$, $k\neq j$. Therefore, there are ${(p-1)\cdot (p-2)} \over 2$ subgraphs $G_{ijk}$ containing the inner edge $(u,v)$. 
	
	(2) If edge $(u,v)$ is an outer edge (e.g., $P(u)=i,P(v)=j$), it will be output to $G_{ijk}$ where $i=P(u)$ and $j=P(v)$ for each $k\in [0;p-1], k\neq i$ and $k\neq j$. So, there are $p-2$ subgraphs $G_{ijk}$ containing this edge.
\end{proof}
Due to Lemma \ref{lm:mrgp_www_inner_outer}, finding subgraphs $G_{ijk}$ that contain an edge $(u,v)$ can be implemented in quadric complexity as shown in Algorithm \ref{alg:improved-find-subgraphs-G_ijk}.

\begin{algorithm}[H]
	\caption{Finding $G_{ijk}$ containing an edge $(u,v)$}
	\begin{algorithmic}[1]
		\small
		\Function{FindSubGraphs}{$u,v$}
		\State $S=\emptyset$
		\If{$P(u)=P(v)$}
		\For{$a \in [0;p-1]$ \textbf{and} $a\neq P(u)$}
		\For{$b \in [a+1;p-1]$ \textbf{and} $b\neq P(u)$}
		\State $S = S \cup \{sorted(a,b,P(u)) \}$
		\EndFor
		\EndFor
		\Else
		\For{$a \in [0;p-1]$ \textbf{and} $a\neq P(u)$ \textbf{and} $a\neq P(v)$}
		\State $S = S \cup \{sorted(a,P(u),P(v)) \}$
		\EndFor
		\EndIf
		\Return $S$
		\EndFunction
		
	\end{algorithmic}
	\label{alg:improved-find-subgraphs-G_ijk}
\end{algorithm}

\noindent\textbf{(ii) Adding rear-edges to subgraphs $G_{ijk}$.} What is the motivation of adding rear-edges?

Let's consider only main edges $(u,v)$ of subgraphs $G_{ijk}$. Once again, an edge $(u,v)$ in subgraph $G_{ijk}$ is a main edge if $\{P(u), P(v)\} \in \{i,j,k\}$. Otherwise, it is a rear-edge. For each main edge $(u,v)$ associated with its distance in each subgraph $G_{ijk}$, we can load vertices' degree into memory, and compute $DI(u,v)$ based on Eq.\ref{eq:DI-1} and $CI_c(u,v)$ for every common vertex $c$ based on Eq.\ref{eq:CI}. But how can we compute $EI(u,v)$? Let's look at a main edge (12, 10) of subgraph $G_{012}$ in Figure \ref{fig:big-picture-dynamic-interactions} as an example. We can see that vertex 9 is an exclusive neighbor of vertex 12. $EI(10,12)$ depends on $EI_9(10,12)$ (See Equation \ref{eq:EI-partial}). To compute $EI_9(10,12)$, we need to measure similarity $\vartheta(9,10)$ in Eq.\ref{eq:virtual} and $\rho(9,10)$ in Eq.\ref{eq:rho}. To accurately compute $\vartheta(9,10)$, we need both star graph $(9, \Gamma(9))$ and star graph $(10, \Gamma(10))$ to find common neighbors. But in subgraph $G_{012}$, without considering rear-edges, we are missing necessary edges $(9,7)$ and $(10,7)$ to compute $\vartheta(9,10)$ since $P(7)=3 \notin \{0,1,2\}$. Similar analysis to vertex 2, an exclusive neighbor of vertex 10, we are missing edge $(12,7)$ to compute $\vartheta(12,2)$. 
Motivated by these observations and to guarantee the correctness of MRAttractor, we added rear edges $(9,7)$, $(10,7)$ and $(12,7)$, denoted as dotted edges, to subgraph $G_{012}$ to correctly compute exclusive interactions. We resolved the first challenge mentioned in Section~\ref{sec:intro} by adding rear-edges.
\begin{algorithm}[H]
	\caption{MR2: \textsc{DecGP} - Computing 3 Types Interactions}
	\small
	\begin{algorithmic}[1]
		\Statex \textbf{Map:} \textbf{Input}: $\big \langle {u; \Gamma(u)} \big \rangle$
		\State $S_{T}=\emptyset$ \Comment Set of triples $(i,j,k)$
		\For{\textbf{each} $(v, d(u,v))\in \Gamma(u)$ }
		\State $S_{T} = S_T\;\cup $ \Call{FindSubGraphs}{$(u,v)$}
		\EndFor
		
		\For{\textbf{each} $(i,j,k)\in $  $S_T$}
		\State emit $\big \langle (i,j,k);(u, \Gamma(u)) \big \rangle$
		\EndFor

		\Statex \textbf{Reduce:} \textbf{Input:} $ \big \langle (i,j,k); \{(u,\Gamma(u))|P(u) \in \{i,j,k\} \} \big \rangle $ 
		\State Read value $p,\lambda$ from Hadoop configuration object.
		\State $stars=\emptyset$ \Comment An empty dictionary of star graphs
		\State $S_M = \emptyset$ \Comment Set of main edges.
		\For{\textbf{each} $(u,\Gamma(u))$ routed by key $(i,j,k)$}
		\State $stars[u]=\Gamma(u)$ \Comment Insert to dictionary
		\For{each $(v,d(u,v)) \in \Gamma(u)$}
		\If{$(P(u),P(v)) \in \{i,j,k\}$} \Comment $(u,v)$ is a main edge
		\State $S_M = S_M \cup \{sorted(u,v),d(u,v) \}$
		\EndIf
		\EndFor
		\EndFor
		\For{\textbf{each} $(u,v,d(u,v)) \in S_M $ \textbf{and} $0<d(u,v)<1$}
		\State $S_I=0$
		\State $S_I+=$\Call{ComputeDI}{$u,v,d(u,v),p$}
		\State $S_I+=$\Call{ComputeCI}{$u,v,d(u,v),\Gamma(u),\Gamma(v),p,\{i,j,k\}$}

		\For{ $(x, d(v,x)) \in \Gamma(v)$ and $x\in EN(v)$}
		\If{$P(x) \in \{i,j,k\}$}
		\State $\Gamma(x) = stars[x]$ \Comment Retrieve neighbors of vertex $x$
		\State $S_I+=$\Call{ComputeEI}{$\wedge(u,v,x),\Gamma(u),\Gamma(x),p,\lambda$}
		\EndIf
		\EndFor
		
		\For{$(y, d(u,y)) \in \Gamma(u)$ and $y\in EN(u)$}
		\If{$P(y)\in \{i,j,k\}$}
		\State $\Gamma(y) = stars[y]$ \Comment Retrieve neighbors of vertex $y$
		\State $S_I+=$\Call{ComputeEI}{$\wedge(v,u,y),\Gamma(v),\Gamma(y),p,\lambda$}
		\EndIf
		\EndFor
		\State emit $\big \langle {(u,v); S_I} \big \rangle$
		\EndFor
	\end{algorithmic}
	\label{alg:compute_three_interactions}
	
\end{algorithm}
\noindent\textbf{(iii) Computing DI, CI and EI of main edges.}
By adding rear edges to $G_{ijk}$, we are now able to compute DI, CI and EI of main edges in subgraph $G_{ijk}$. Algorithm~\ref{alg:compute_three_interactions} shows the pseudocode of \textsc{DecGP} to compute DI,CI and EI of every main edge in $G_{ijk}$. It is a MapReduce algorithm consisting of a map function and a reduce function. In particular, each map instance handles a star graph $(u, \Gamma(u))$ output from Algorithm \ref{alg:generating_star_graphs}. It firstly finds distinct subgraphs $G_{ijk}$, represented as a sorted triple $(i,j,k)$, which contains at least one of the edges $(u,v, d(u,v))$ where $(v,d(u,v)) \in \Gamma(u)$ by applying function \textsc{FindSubGraphs}. 
Secondly, it emits star graph $(u, \Gamma(u))$ to subgraph $G_{ijk}$ (See lines [2-5] of \textsc{DecGP}). By emitting the whole star graph $(u, \Gamma(u))$ to subgraph $G_{ijk}$, we ensure that there are always enough edges to compute similarity of unconnected nodes.

Moving on to each reduce instance of Algorithm \ref{alg:compute_three_interactions}. Each reduce instance receives a list of star graph $(u, \Gamma(u))$ routed by the sorted key $(i,j,k)$ of subgraph $G_{ijk}$, which can be viewed as an adjacent list representation of subgraph $G_{ijk}$. From lines [8-13] in Algorithm~\ref{alg:compute_three_interactions}, we find set $S_M$ of main edges $(u,v, d(u,v)) \in E_{ijk}$, which are used for computing DI, CI and EI in Lines [14-26]. After computing DI, CI and EI of every non-converged main edge in $G_{ijk}$, each reduce instance will emit key-value pairs where key is a main edge $(u,v)$ and value is the aggregated DI, CI and EI of edge $(u,v)$. Three main functions \textsc{ComputeDI}, \textsc{ComputeCI} and \textsc{ComputeEI} are explained below. 

\textbf{Computing DI.} Algorithm \ref{alg:computeDI} shows how we compute Direct Interaction. $DI(u,v)$ of each main edge $(u,v)$ is computed based on Eq.\ref{eq:DI-1}. However, this edge $(u,v)$ will be re-computed in multiple subgraphs $G_{ijk}$ based on Lemma \ref{lm:mrgp_www_inner_outer} so we need to scale $DI(u,v)$ down to guarantee correctness. In particular, if $(u,v)$ is an inner edge, we scale $DI(u,v)$ down ${(p-1)\cdot (p-2)}\over 2$ times. Otherwise, we scale it down $p-2$ times.
\begin{algorithm}[H]
	\caption{Direct interaction of the edge $(u,v)$}
	\small
	\begin{algorithmic}[1]
		\Function{ComputeDI}{$u,v,d(u,v),p$}
		\State $DI={sin(1 - d(u,v)) \over {deg(u)} } + {sin(1 - d(u,v)) \over {deg(v)}}$
		\If{$P(u)=P(v)$}
		\State $DI = {{2\cdot DI} / { ((p-1)\cdot (p-2)) } } $
		\Else
		\State $DI = DI / { (p-2) }$
		\EndIf
		\Return $DI$
		\EndFunction
		
	\end{algorithmic}
	\label{alg:computeDI}
\end{algorithm}

\textbf{Computing CI.} Algorithm \ref{alg:computeCI} shows how we compute common interaction $CI(u,v)$ of main edge $(u,v)$. The input of this function is the main edge $(u,v)$, its distance $d(u,v)$, $u$'s neighbors $\Gamma(u)$, $v$'s neighbors $\Gamma(v)$, the number of partitions $p$ and key of subgraph $G_{ijk}$. For each common neighbor $c$ of $u$ and $v$, we only consider vertex $P(c)\in \{i,j,k\}$ because we only care about main edges of $G_{ijk}$. We compute $CI_c(u,v)$ based on Eq.\ref{eq:CI}. However, we need to scale down $CI_c(u,v)$ since three vertices $u$,$v$ and $c$ will form a triangle $\triangle(u,v,c)$ and Siddharth et al., \cite{suri2011counting} pointed out that $\triangle(u,v,c)$ can be repeated in several subgraphs $G_{ijk}$ in Lemma \ref{lm:www_repeatedness_triangle_lemma}. In particular, if three vertices of $\triangle(u,v,c)$ are in a same partition $V_i$, we scale $CI_c(u,v)$ down ${(p-1)\cdot (p-2)}\over {2}$ times. If $\triangle(u,v,c)$ has two nodes in same partition and the third node belongs to a different partition, $CI_c(u,v)$ is scaled down $p-2$ times.
\begin{algorithm}[H]
	\caption{Common interaction of the edge $(u,v)$}
	\small
	\begin{algorithmic}[1]
		\Function{ComputeCI}{$u,v,d(u,v),\Gamma(u),\Gamma(v),p, \{i,j,k\}$}
		\State $CI=0$
		\For{$c\in CN(u,v)$ \textbf{and} $P(c) \in \{i,j,k\}$ \textbf{and} $(c,d(u,c)) \in \Gamma(u)$ \textbf{and} $(c,d(v,c)) \in \Gamma(v)$}
		\State $w_1 = 1 - d(u,c)$ ; $w_2 = 1 - d(v,c)$
		\State $CI_c(u,v)= {{w_2 \cdot sin(w_1)} \over {deg(u)} } + { {w_1 \cdot sin(w_2)} \over {deg(v)} }$
		\If{$P(u) = P(v) = P(c)$}
		\State $CI_c(u,v) = {2\cdot CI_c(u,v) / {((p-1)\cdot (p-2))} }$
		\ElsIf{\small{$P(u)=P(v)\,|\,P(u)=P(c)\,|\,P(v)=P(c)$}}
		\State $CI_c(u,v) = {CI_c(u,v)/ ({p-2})}$
		\EndIf
		\State $CI = CI + CI_c(u,v)$
		\EndFor
		\State \textbf{return} $CI$
		\EndFunction
	\end{algorithmic}
	\label{alg:computeCI}
\end{algorithm}

\begin{lemma}
	\label{lm:www_repeatedness_triangle_lemma}
	Given $\triangle(u,v,c)$ of original graph $G(V,E)$:
	\begin{enumerate}
		\item if three nodes are in a same partition, $\triangle(u,v,c)$ will also appear in ${(p-1)\cdot (p-2)}\over {2}$ different subgraphs $G_{ijk}$.
		\item if two nodes are in a same partition and one node belongs to a different partition, $\triangle(u,v,c)$ will appear in $p-2$ different subgraphs $G_{ijk}$.
		\item if three nodes are in three different partition, there is only one subgraph containing $\triangle(u,v,c)$.
	\end{enumerate}
\end{lemma}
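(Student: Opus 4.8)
The plan is to reduce the statement to a pure counting problem about how many index triples $\{i,j,k\}$ contain the partition labels of the three triangle vertices, mirroring the argument already used for edges in Lemma~\ref{lm:mrgp_www_inner_outer}. First I would establish the key membership criterion: since $E_{ijk}=\{(u,v)\in E \mid u,v\in V_{ijk}\}$ and $V_{ijk}=V_i\cup V_j\cup V_k$, all three edges $(u,v),(u,c),(v,c)$ of the triangle lie in $E_{ijk}$ if and only if all three endpoints $u,v,c$ lie in $V_{ijk}$. Because the partitions are disjoint, $u,v,c\in V_{ijk}$ holds exactly when each of $P(u),P(v),P(c)$ belongs to $\{i,j,k\}$, i.e. when the label set $S=\{P(u),P(v),P(c)\}$ satisfies $S\subseteq\{i,j,k\}$. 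Hence counting the subgraphs containing $\triangle(u,v,c)$ is the same as counting the distinct $3$-subsets $\{i,j,k\}$ of $\{0,\dots,p-1\}$ with $S\subseteq\{i,j,k\}$, where the three indices are pairwise distinct by the definition of the subgraphs.

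With this reduction, each of the three cases follows by a direct combinatorial count governed by the size of $S$. In Case (1) all labels coincide, so $|S|=1$; fixing that single label, the remaining two indices form any distinct pair drawn from the other $p-1$ partitions, giving $\binom{p-1}{2}=\frac{(p-1)(p-2)}{2}$ subgraphs. In Case (2) exactly two labels coincide, so $|S|=2$; both labels must appear in $\{i,j,k\}$, leaving one free index to be chosen from the remaining $p-2$ partitions, giving $\binom{p-2}{1}=p-2$ subgraphs. In Case (3) the three labels are pairwise distinct, so $|S|=3$; then $\{i,j,k\}$ is forced to equal $S$ itself, yielding exactly one subgraph.

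The argument is essentially routine once the membership criterion is in place, so the only point requiring care is verifying that criterion cleanly: I would check that requiring all three vertices in $V_{ijk}$ is both necessary and sufficient for the whole triangle to survive in $G_{ijk}$, with necessity immediate (each vertex is an endpoint of some triangle edge) and sufficiency relying on the disjointness of the partitions so that no label is double-counted. I would also note that this result is the triangle analogue of Lemma~\ref{lm:mrgp_www_inner_outer} and reproduces the repeatedness counts of \cite{suri2011counting}, which is precisely what justifies the down-scaling factors applied to $CI_c(u,v)$ in Algorithm~\ref{alg:computeCI}.
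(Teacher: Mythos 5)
Your proposal is correct and follows essentially the same route as the paper's proof: both reduce the question to counting the distinct index triples $\{i,j,k\}$ that contain the vertices' partition labels, then handle the three cases by counting the free indices ($\binom{p-1}{2}$, $p-2$, and $1$ respectively). The only difference is that you make the membership criterion (triangle in $G_{ijk}$ iff $\{P(u),P(v),P(c)\}\subseteq\{i,j,k\}$) explicit, which the paper leaves implicit; this is a minor sharpening, not a different argument.
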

\begin{proof}
	(1) When $u$, $v$ and $c$ are in a same partition $P(u)=P(v)=P(c)=i$, $\triangle(u,v,c)$ will appear in subgraphs $G_{ijk}$ where $j\in [0;p-1]$, $k\in [0;p-1]$ and $j\neq k, k\neq i, i\neq j$. Totally, there are ${(p-1)\cdot (p-2)}\over {2}$ different subgraphs $G_{ijk}$.
	
	(2) Without loss of generality, we assume that $P(u)=P(v)=i$ and $P(c)=j$. $\triangle(u,v,c)$ will appear in subgraphs $G_{ijk}$ where $k\in [0;p-1]$ and $k\neq i$ and $k \neq j$. Therefore, there are $p-2$ distinct subgraphs $G_{ijk}$ containing $\triangle(u,v,c)$.
	
	(3) Finally, when three nodes are in three different partitions $P(u)=i$, $P(v)=j$ and $P(c)=k$, there is only one subgraph $G_{ijk}$ that contains this $\triangle(u,v,c)$.
\end{proof}
\begin{algorithm}[H]
	\caption{Exclusive interaction of vertex $x$ on edge $(u,v)$}
	\begin{algorithmic}[1]
		\Function{ComputeEI}{$\wedge(u,v,x),\Gamma(u),\Gamma(x),p,\lambda$}
		\State $EI_x(u,v)=0$
		\State Compute $\vartheta(x,u)$ based on $\Gamma(u)$ and $\Gamma(x)$
		\State  $\rho(x,u) = \vartheta(x,u)$
		\If{$\vartheta(x,u) < \lambda$}
		\State $\rho(x,u) = \vartheta(x,u)  - \lambda$
		\EndIf
		\State $EI_x(u,v) = \rho(x,u) \cdot sin(1 - d(v,x)) / deg(v)$
		\If{$P(u) = P(v) = P(x)$}
		\State $EI_x(u,v) = {2\cdot EI_x(u,v) / {((p-1)\cdot (p-2))} }$
		\ElsIf{\small{$P(u)=P(v)\,|\,P(u)=P(x)\,|\,P(v)=P(x)$}}
		\State $EI_x(u,v) = {EI_x(u,v)/ ({p-2})}$
		\EndIf
		
		\State \textbf{return} $EI_x(u,v)$
		\EndFunction
	\end{algorithmic}
	\label{alg:computeEI}
\end{algorithm}
\textbf{Computing EI.} For each main edge $(u,v)$ of $G_{ijk}$, we sequentially process each exclusive neighbor $x\in EN(v)$ (See Lines [18-21] of Algorithm \ref{alg:compute_three_interactions}) and each exclusive neighbor $y\in EN(u)$ (See Lines [22-25] of Algorithm \ref{alg:compute_three_interactions}). Once again, we only pay attention to vertex $x$ and vertex $y$ such that $P(x)\in \{i,j,k\}$ and $P(y)\in \{i,j,k\}$ since main edges are our target.

Before describing details of the computation, we present definition of a wedge or a two-hop path in Definition \ref{def:wedge}.
\begin{mydef}[Wedge]{\ \\}
	In $G(V,E)$, three nodes $u$, $v$ and $x$ form a wedge denoted as $\wedge(u,v,x)$ if $(u,v) \in E, (v,x) \in E$ and there is no edge between $u$ and $x$. A wedge can be called two-hop path \cite{gupta2016decompositions}.
	\label{def:wedge}
\end{mydef}

As we can see, three nodes $u,v$ and $x$ form a $\wedge(u,v,x)$. Similarly, three nodes $v,u$ and $y$ create $\wedge(v,u,y)$. Without loss of generality, we only explain how we compute $EI_x(u,v)$, the effect of exclusive neighbor $x\in EN(v)$ on distance of edge $(u,v)$ based on $\wedge(u,v,x)$.

In function \textsc{computeEI} of Algorithm \ref{alg:computeEI}, we input $\wedge(u,v,x)$, $u$'s neighbors, $x$'s neighbors, the number of partitions $p$ and cohesive parameters $\lambda$ \cite{kdd.dyna}. To begin with, we compute $\vartheta(x,u)$ (See Eq.\ref{eq:virtual}), the similarity of vertex $u$ and $x$ based on $\Gamma(u)$ and $\Gamma(x)$. Then, we derive $\rho(x,u)$ and compute $EI_x(u,v)$ in Eq.\ref{eq:EI-partial}. Computing $EI_x(u,v)$ in each subgraph $G_{ijk}$ will face duplication problem since wedge $\wedge(u,v,x)$ can appear in other different subgraphs $G_{ijk}$. Therefore, we need to scale down $EI_x(u,v)$ appropriately. Lemma \ref{lm:lm_wedge} shows the number of subgraphs $G_{ijk}$ that a wedge $\wedge(u,v,x)$ can appear. In particular, if $\wedge(u,v,x)$ has three nodes in a same partition, we scale $EI_x(u,v)$ down ${(p-1)\cdot (p-2)}\over 2$ times. If the first two nodes in $\wedge(u,v,x)$ are in the same partition, but the third node is in another partition, we scale $EI_x(u,v)$ down $p-2$ times (see Lines [8-11] of Algorithm \ref{alg:computeEI}).

\begin{lemma}\label{lm:lm_wedge} For each wedge $\wedge(u,v,x)$:
	\begin{enumerate}
		\item If three nodes $u$, $v$ and $x$ are placed in the same partition, it will appear in ${(p-1)\cdot (p-2)}\over 2$ different subgraphs $G_{ijk}$.
		\item If two nodes are in the same partition and the other one belongs to a different partition, it will appear in $p-2$ different subgraphs $G_{ijk}$.
		\item If each vertex is in different partitions, it will belong to only one subgraph $G_{ijk}$.
	\end{enumerate}
\end{lemma}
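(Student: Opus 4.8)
The plan is to observe that this lemma is structurally identical to the triangle lemma (Lemma~\ref{lm:www_repeatedness_triangle_lemma}) and reduces to the same combinatorial count, once we pin down precisely \emph{when} a wedge is visible inside a subgraph. I would first establish the characterization that a wedge $\wedge(u,v,x)$ is processed in subgraph $G_{ijk}$ if and only if all three partition labels $P(u)$, $P(v)$ and $P(x)$ lie in $\{i,j,k\}$. The forward direction follows from how Algorithm~\ref{alg:compute_three_interactions} computes $EI$: the edge $(u,v)$ must be a \emph{main} edge of $G_{ijk}$, forcing $P(u),P(v)\in\{i,j,k\}$, and the exclusive neighbor $x$ is considered only when $P(x)\in\{i,j,k\}$. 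For the converse I would verify that whenever $P(u),P(v),P(x)\in\{i,j,k\}$, the star graphs $(u,\Gamma(u))$, $(v,\Gamma(v))$ and $(x,\Gamma(x))$ are all routed to key $(i,j,k)$, so $\vartheta(x,u)$ and hence $EI_x(u,v)$ are computed there. The first two are routed because the main edge $(u,v)$ causes \textsc{FindSubGraphs}$(u,v)$ to emit both stars to every triple containing $\{P(u),P(v)\}$; the star of $x$ is routed because the edge $(v,x)\in E$ causes \textsc{FindSubGraphs}$(x,v)$ to emit $(x,\Gamma(x))$ to every triple containing $\{P(x),P(v)\}$, in particular to $(i,j,k)$.

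Given this characterization, the problem becomes a pure count of sorted, distinct triples $(i,j,k)\subseteq\{0,\dots,p-1\}$ that contain the set $S=\{P(u),P(v),P(x)\}$, and I would split into the three stated cases. When $|S|=1$ (all three vertices in one partition $a$), I need $a\in\{i,j,k\}$ while the remaining two indices form any distinct pair drawn from the other $p-1$ labels, giving $\binom{p-1}{2}=\frac{(p-1)(p-2)}{2}$ triples. When $|S|=2$ (two vertices share a partition $a$, the third lies in $b\neq a$), both $a$ and $b$ must appear, so only the third index is free, ranging over the $p-2$ labels outside $\{a,b\}$, which yields $p-2$ triples. When $|S|=3$ (all three partitions distinct), the sorted triple is forced to equal $\mathrm{sorted}(P(u),P(v),P(x))$, so exactly one subgraph qualifies.

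The main obstacle, and the only place this argument departs from the triangle case, is justifying the converse direction of the characterization. For a triangle all three edges lie in $E$, so membership in $G_{ijk}$ is immediate from the subgraph definition; for a wedge the edge $(u,x)$ is absent by definition, so I must argue that it is the \emph{rear-edge} routing, rather than a main-edge of $x$, that delivers the neighborhood $\Gamma(x)$ required to evaluate $\vartheta(x,u)$ inside $G_{ijk}$. Once this routing fact is checked, the wedge multiplicity coincides term-by-term with the triangle multiplicity, and the scaling factors applied in \textsc{ComputeEI} (Algorithm~\ref{alg:computeEI}, lines [8--11]) are seen to cancel the multiplicity established here exactly, confirming the correctness of the distributed $EI$ computation.
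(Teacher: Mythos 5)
Your proof is correct, and it takes a route that differs from the paper's in a useful way. The paper's proof never states when a wedge is ``processed''; it argues case by case that the two edges of $\wedge(u,v,x)$ co-appear in subgraphs, and imports the counts from Lemma~\ref{lm:mrgp_www_inner_outer} (an inner edge lies in $\frac{(p-1)(p-2)}{2}$ subgraphs, an outer edge in $p-2$). You instead prove an exact algorithmic characterization --- the wedge is handled in $G_{ijk}$ iff $\{P(u),P(v),P(x)\}\subseteq\{i,j,k\}$ --- and then count sorted triples containing that label set directly: $\binom{p-1}{2}$, $p-2$, or $1$ according to whether the set has size $1$, $2$, or $3$. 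This buys two things. First, it makes explicit the routing facts the paper leaves implicit (that \textsc{FindSubGraphs} applied to the edges $(u,v)$ and $(v,x)$ delivers the stars of $u$, $v$, and $x$ to the reducer for $(i,j,k)$, so $\vartheta(x,u)$ is computable there), which is exactly what ties the combinatorics to what Algorithm~\ref{alg:compute_three_interactions} computes. Second, your count depends only on $|\{P(u),P(v),P(x)\}|$, so case~2 is handled uniformly; the paper's ``without loss of generality'' $P(u)=P(v)=i$, $P(x)=j$ does not literally cover the configuration $P(u)=P(x)\neq P(v)$, where both wedge edges are outer edges and the paper's inner-plus-outer argument does not apply as stated, even though the answer is still $p-2$. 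One small terminological slip on your side: in your final paragraph you attribute the delivery of $\Gamma(x)$ to ``rear-edge routing rather than a main-edge of $x$,'' but by your own first-paragraph argument the routing trigger is the edge $(v,x)$, which \emph{is} a main edge of $G_{ijk}$ when $P(v),P(x)\in\{i,j,k\}$; the rear edges are merely what the routed star $(x,\Gamma(x))$ carries along, and they are needed only to evaluate $\vartheta(x,u)$ correctly. This does not affect the validity of the argument.
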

\begin{proof}
	(1) When three nodes are in a same partition, $P(u)=P(v)=P(w)=i$. Edges $(u,v)$ and $(v,x)$ are two inner edges and always appear together. Due to Lemma \ref{lm:mrgp_www_inner_outer}, inner edges will appear in ${(p-1)\cdot (p-2)}\over 2$ subgraphs $G_{ijk}$. Therefore, $\wedge(u,v,x)$ will appear in ${(p-1)\cdot (p-2)}\over 2$ different subgraphs $G_{ijk}$.
	
	(2) Without loss of generality, we assume that $P(u)=P(v)=i$ and $P(x)=j$. For each subgraph $G_{ijk}$ that outer edge $(v,x)$ appears, we can see that inner edge $(u,v)$ also appears, leading to the existence of $\wedge(u,v,x)$ in $G_{ijk}$. We know that the outer edge $(v,x)$ will appear in $p-2$ different subgraphs $G_{ijk}$ based on Lemma \ref{lm:mrgp_www_inner_outer}. Therefore, $\wedge(u,v,x)$ will appear in $p-2$ different subgraphs $G_{ijk}$.
	
	(3) If each vertex are in different partitions, $P(u)=i$, $P(v)=j$, $P(x)=k$. Thus, only one $G_{ijk}$ has this wedge.
\end{proof}

\subsubsection{Updating edge distances based on sliding window}
Next, we move on to updating distance of every edge $(u,v)$ and its sliding window in original graph $G(V,E)$ based on $DI(u,v)$, $CI(u,v)$ and $EI(u,v)$ computed by Algorithm~\ref{alg:compute_three_interactions}.

In Section~\ref{sec:intro}, we addressed three key challenges to design and implement MRAttractor. The second challenge was how to make all edge distances converged with minimum overhead of network communication and disk I/O. To overcome the second challenge, we use sliding window technique \cite{chang2003finding}.

\begin{algorithm}[!ht]
	\caption{MR3: Updating Distances and Sliding Window}\label{alg:updateEdge}
	
	\begin{algorithmic}[1]
		\Statex \textbf{Map:} \textbf{Input:} \small$\big \langle (u,v); S_I \big \rangle$, $\big  \langle(u,v);d(u,v)\big \rangle$ and $\big  \langle(u,v);\textbf{w}\big \rangle$
		\State emit $\big \langle (u,v); vl \big \rangle$ with $vl$ is either $S_I$, $d(u,v)$ or $\textbf{w}$
		\Statex \textbf{Reduce:} \textbf{Input:} $\big \langle (u,v); values \big \rangle$
		\State Read settings $s$ the maximum size of sliding window and $\tau$
		\State $d^{t}(u,v)=0$; $\Delta^t(u,v)=0$; $\textbf{w}^{t+1}=\emptyset$
		\For{$vl \in values$}
		\If{$vl$ \textbf{is} $\textbf{w}$}
		\State $\textbf{w}^{t+1}=\textbf{w}$
		\ElsIf{$vl$ is $d(u,v)$}
		\State $d^{t}(u,v) \leftarrow {d}(u,v)$
		\ElsIf{$vl$ is $S_I$}
		\State $\Delta^t(u,v) \leftarrow \Delta^t(u,v)  + S_I $
		\EndIf
		\EndFor
		
		\If{$ d^t(u,v)=0$ \textbf{or} $d^t(u,v)=1$ }
		\State \textbf{return} \Comment This edge was converged. No need to process it
		\EndIf
		
		\If{$ \Delta^t(u,v) \neq 0$}
		\State $d^{t+1}(u,v)$ = $d^{t}(u,v) -  \Delta^t(u,v)$
		\State $index=(t+1)\mod s$
		\State $\textbf{w}^{t+1}[index]=-1$ \Comment Set position $index$ of vector $\textbf{w}^{t+1}$
		\If{$d^{t+1}(u,v) > d^{t}(u,v)$}
		\State $\textbf{w}^{t+1}[index]=1$
		\EndIf
		\State $x$ is the number of 1 in sliding window $\textbf{w}^{t+1}$
		\State $y$ is the number of -1 in sliding window $\textbf{w}^{t+1}$
		\If{$t+1 \geq s$} \Comment $\textbf{w}^{t+1}$ of edge $(u,v)$ is full
		\If{$\textbf{w}^{t+1}[index]=1$ \textbf{and} $x\geq \tau \cdot s$}
		\State $d^{t+1}(u,v)=1$
		\EndIf
		\If{$\textbf{w}^{t+1}[index]=-1$ \textbf{and} $y\geq \tau \cdot s$}
		\State $d^{t+1}(u,v)=0$
		\EndIf
		\EndIf
		
		\If{$d^{t+1}(u,v) \geq 1$}
		\State $d^{t+1}(u,v) = 1$
		\EndIf
		\If{$d^{t+1}(u,v)\leq 0$}
		\State $d^{t+1}(u,v) = 0$
		\EndIf
		
		\State emit $\big \langle (u,v);d^{t+1}(u,v) \big \rangle$
		\State emit $\big \langle(u,v);\textbf{w}^{t+1} \big \rangle$
		\EndIf
	\end{algorithmic}
\end{algorithm}

Our sliding window model works as follows:
For each edge $(u,v)$, we use a binary vector $\textbf{w}$, indicating the status of edge $(u,v)$, $\textbf{w}_i = 1$ means that $d(u,v)$ at iteration $i^{th}$ increases and $\textbf{w}_i = -1$ means $d(u,v)$ decreases. By using sliding window, we only keep the last $s$ statuses of each edge (i.e. $\textbf{w}\in \textbf{R}^s$) to observe the increasing/decreasing trend of the edge $(u,v)$ in the last $s$ iterations which may be more reliable to reflect the convergence trend of an edge. Then, we predict its final distance (i.e. 0 or 1). To decide if an edge converges or not, we use a threshold $\tau \in [0;1]$. In particular, if the last status of edge $(u,v)$ is $-1$ and there are at least $\tau \times s$ negative values in vector $\textbf{w}$, we decide that edge will eventually converge to 0 (e.g., when $s$=10, $\tau$=0.6 and last status=-1, if at least 6 statuses in $\textbf{w}$ are -1, edge distance will be set to 0). If the last status of edge $(u,v)$ is $1$ and there are at least $\tau \times s$ positive values in vector $\textbf{w}$, we will set $d(u,v)=1$. Otherwise, edge $(u,v)$ still does not converge, and we continue to compute its dynamic interactions as shown in Figure~\ref{fig:MRAttractor_major_components}.




Algorithm~\ref{alg:updateEdge} shows our pseudocode to update edge distances. The map instances of Algorithm \ref{alg:updateEdge} process three types of input: (1) $\big \langle (u,v); S_I \big \rangle$ is the key-value pairs generated Dynamic Interactions. Recall that $S_I$ is the aggregated sum of DI, CI and EI of edge $(u,v)$ output from reduce instances of Algorithm~\ref{alg:compute_three_interactions}. (2) $\big  \langle(u,v);d(u,v)\big \rangle$ is the edge $(u,v)$ and its distance in previous iteration. At the first iteration, $d(u,v)$ is Jaccard distance. (3) $\big  \langle(u,v);\textbf{w}\big \rangle$ is sliding window of edge $(u,v)$. At the first iteration, $\textbf{w}$ is an empty vector. Note that, for each edge $(u,v)$, there are only one pair $\big  \langle(u,v);d(u,v)\big \rangle$, one pair $\big  \langle(u,v);\textbf{w}\big \rangle$ and multiple pairs $\big \langle (u,v); S_I \big \rangle$.

Each map instance of Algorithm \ref{alg:updateEdge} simply outputs a key-value pair where key is an edge $(u,v)$ and value is either $S_I$, $d(u,v)$ or sliding window vector $\textbf{w}$. 

Each reduce instance of Algorithm \ref{alg:updateEdge} receives $values$ routed by key $(u,v)$ and performs two tasks with edge $(u,v)$- (1) computing its new distance and (2) updating its sliding window vector. To begin with, from Lines [3-10], we sum up all $S_I$ values and store it into $\Delta^t(u,v)$. We can verify that $\Delta^t(u,v)=DI(u,v) + CI(u,v) + EI(u,v)$. After computing $\Delta^t(u,v)$, we can derive $d^{t+1}(u,v)$, distance of edge $(u,v)$ at timestamp $t+1$ based on Eq.\ref{eq:update_equation}. Next, from Lines [15-25], we update sliding window vector $\textbf{w}^{t+1}$. Due to modulo function, $index\in[0;s-1]$. Finally, we emit pair $\big \langle (u,v);d^{t+1}(u,v) \big \rangle$ for new distance of edge $(u,v)$ and pair $\big \langle(u,v);\textbf{w}^{t+1} \big \rangle$ for new sliding window vector. These key-value pairs will act as new input for next iteration of \textsc{MRAttractor}.

\subsubsection{Running on Master node}
After deriving new distance of all edges in original graph $G(V,E)$, we will check how many edges $(u,v)$ are still non-converged 0$<d^{t+1}(u,v)<$1. If the number of non-converged edges are smaller than a threshold $\gamma$ we will continue our computation on Master node, which control slave nodes. There are two reasons why we do this. Firstly, Attractor algorithm suffers long-tail iterations because some edges converge slowly \cite{meng2016improved}. Secondly, after multiple iterations, the number of non-converged edges are very small which can be handle efficiently on single computer. A well-known problem of MapReduce Hadoop is the overhead of I/O operations \cite{jiang2010performance}. Therefore, by running on single computer, we can avoid unnecessary overhead of Hadoop framework. In this work, we set $\gamma=10,000$ for all testing networks. The second challenge mentioned in Section~\ref{sec:intro} is resolved by the sliding window model and running on the master node. 

\subsubsection{Complexity Analysis}
 We now show the correctness of computing dynamic interactions and analysis of \textsc{DecGP}'s complexity since it is the most time consuming part.
\begin{lemma}
	\label{lm:correctness_DI_CI_EI}
	For each edge $(u,v)$ of $G(V,E)$, its $DI(u,v)$, $CI(u,v)$ and $EI(u,v)$ are computed correctly in each loop.
\end{lemma}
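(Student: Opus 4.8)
The plan is to prove the lemma by decomposing it into three independent claims, one each for $DI(u,v)$, $CI(u,v)$ and $EI(u,v)$, since these three quantities are accumulated additively into $S_I$ in Algorithm~\ref{alg:compute_three_interactions} and are then summed across all subgraphs $G_{ijk}$ into $\Delta^t(u,v)$ by the reducer of the distance-update step (Algorithm~\ref{alg:updateEdge}). For each quantity I would show that the sum of the \emph{scaled partial contributions} emitted over all subgraphs in which $(u,v)$ is a main edge equals exactly the value defined in Section~\ref{sec:attractor}, so that $\Delta^t(u,v)=DI(u,v)+CI(u,v)+EI(u,v)$. Correctness in every loop then follows because the argument only uses the edge distances of the current timestamp and is independent of $t$. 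Throughout, $p\geq 3$ so that the triple-partition scheme and the scale-down denominators are well defined. For $DI(u,v)$ the argument is immediate: \textsc{ComputeDI} evaluates the full expression of Eq.~\ref{eq:DI-1} in each subgraph where $(u,v)$ is a main edge and divides it by $\tfrac{(p-1)(p-2)}{2}$ if $(u,v)$ is inner or by $p-2$ if it is outer; by Lemma~\ref{lm:mrgp_www_inner_outer} the number of such subgraphs equals precisely this same count, so the scaled contributions sum to one copy of $DI(u,v)$.

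The $CI$ term follows a per-common-neighbor bookkeeping pattern. For each $c\in CN(u,v)$, \textsc{ComputeCI} (Algorithm~\ref{alg:computeCI}) adds $CI_c(u,v)$ only when $P(c)\in\{i,j,k\}$; since $(u,v)$ being a main edge already forces $P(u),P(v)\in\{i,j,k\}$, the contribution of $c$ is counted exactly in the subgraphs that contain the whole triangle $\triangle(u,v,c)$. I would then invoke Lemma~\ref{lm:www_repeatedness_triangle_lemma} to match, in each of the three partition configurations, the number of such subgraphs with the reciprocal of the scale-down factor applied by the algorithm, so that the contributions of $c$ telescope to a single $CI_c(u,v)$; summing over $c$ gives $CI(u,v)$. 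I would also note that the count in Lemma~\ref{lm:www_repeatedness_triangle_lemma} is nonzero in every configuration, so no common neighbor is ever dropped. The $EI$ multiplicity is handled identically, now driven by Lemma~\ref{lm:lm_wedge}: each wedge $\wedge(u,v,x)$ for $x\in EN(v)$ (and symmetrically $\wedge(v,u,y)$ for $y\in EN(u)$) is evaluated exactly in the subgraphs containing the whole wedge (enforced by the guard $P(x)\in\{i,j,k\}$ together with $(u,v)$ being a main edge), and the scale-down in \textsc{ComputeEI} is the reciprocal of the corresponding count.

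The hardest part will be the $EI$ case, and specifically the correctness of $\vartheta(x,u)$ rather than its multiplicity. Unlike $DI$ and $CI_c$, the partial value $EI_x(u,v)$ of Eq.~\ref{eq:EI-partial} depends through $\rho$ on $\vartheta(x,u)$ (Eq.~\ref{eq:virtual}), whose numerator and denominator range over \emph{all} of $x$'s and $u$'s neighbors and their common neighbors, some of which may sit in partitions outside $\{i,j,k\}$. The partial value is therefore exact only if the complete star graphs $\Gamma(x)$ and $\Gamma(u)$, including their rear edges, are present in the reducer handling $G_{ijk}$. I would argue this holds by three observations: the map phase of Algorithm~\ref{alg:compute_three_interactions} emits the \emph{entire} star graph $(u,\Gamma(u))$ to each target subgraph rather than only its main edges; the rear edges guarantee that every distance incident to $x$ and $u$ is available; and the guard $P(x)\in\{i,j,k\}$ ensures $x$'s star graph is routed to this reducer and can be retrieved as $stars[x]$. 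Hence $\vartheta(x,u)$ is computed from exactly the same neighbor set as the global single-machine definition, and so $EI_x(u,v)$ matches the true value in every subgraph where it is evaluated.

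Combining the three claims, the scaled partial contributions for $DI$, $CI$ and $EI$ each sum across subgraphs to one copy of the corresponding global value, so the reducer of Algorithm~\ref{alg:updateEdge} obtains $\Delta^t(u,v)=DI(u,v)+CI(u,v)+EI(u,v)$ precisely, which is the correctness asserted. Since none of the counting arguments or the $\vartheta$-availability argument depends on the iteration index, the conclusion holds in each loop, establishing the lemma.
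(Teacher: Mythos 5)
Your proof is correct and takes essentially the same approach as the paper's: compute scaled partial values per subgraph, match each scale-down factor to the multiplicity counts of Lemmas \ref{lm:mrgp_www_inner_outer}, \ref{lm:www_repeatedness_triangle_lemma} and \ref{lm:lm_wedge}, and aggregate in the reducer of Algorithm \ref{alg:updateEdge} to get $\Delta^t(u,v)=DI(u,v)+CI(u,v)+EI(u,v)$. The paper's proof is only a brief sketch, so your added detail---especially the argument that $\vartheta(x,u)$ is exact because entire star graphs (including rear edges) are routed to the reducer and $P(x)\in\{i,j,k\}$ guarantees $stars[x]$ is available---simply makes explicit what the paper defers to its design discussion of rear edges.
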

\begin{proof}
In each $G_{ijk}$, we compute partial values of DI, CI and EI for every main edge with appropriate scaling as shown in Algorithms \ref{alg:compute_three_interactions}, \ref{alg:computeDI}, \ref{alg:computeCI} and \ref{alg:computeEI}. After computing dynamic interactions, we aggregate DI, CI and EI for every edge of the original graph $G(V,E)$ in reduce instances of Algorithm \ref{alg:updateEdge}. Since we apply scaling correctly, the aggregated values $\Delta^{t+1}(u,v)$ of every edge is exactly equal to $DI(u,v)+CI(u,v)+EI(u,v)$, leading to correctness of our computation. 
\end{proof}

\begin{table}[!ht]
	\centering
	\resizebox{1.0\linewidth}{!}
	{
		\begin{tabular}{|l|l|l|l|l|l|}
			\hline
			Datasets      & $|$V$|$   & $|$E$|$   & $|$classes$|$                 & AVD    & CC    \\ \hline
			Karate        & 34        & 78        & 2                       & 4.588  & 0.571 \\ \hline
			Football      & 115       & 613       & 12                      & 10.661 & 0.403 \\ \hline
			Polbooks      & 105       & 441       & 3                       & 8.400  & 0.488 \\ \hline
			Amazon        & 334,863   & 925,872   & top5000                 & 5.530  & 0.397 \\ \hline
			Collaboration & 9,875     & 25,973    & \multicolumn{1}{c|}{unknown} & 5.260  & 0.472 \\ \hline
			Friendship    & 58,228    & 214,078   & \multicolumn{1}{c|}{unknown} & 7.353  & 0.172 \\ \hline
			Road          & 1,088,092 & 1,541,898 & \multicolumn{1}{c|}{unknown} & 2.834  & 0.046 \\ \hline
		\end{tabular}
	}
	\vspace{2pt}
	\caption{Networks with labels and non-labels, average degree (AVD), and average clustering coefficient (CC).}
	\label{tbl:sliding_window_datasets}
	\vspace{-15pt}
\end{table}

%
%
\begin{lemma}
	For each setting of $p$:
	\begin{enumerate}
			\item The expected number of main edges in $G_{ijk}$ is O($m\over p^2$).
			\item The expected number of key-value pairs $\big \langle (u,v); S_I \big \rangle$ is O($mp$) for all reduce instances of Algorithm \ref{alg:compute_three_interactions}.
	\end{enumerate}
	\label{lm:complexity_analysis}
\end{lemma}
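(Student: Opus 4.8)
The plan is to model the partition hash $P(\cdot)$ as assigning to each vertex an independent, uniform label in $\{0,\dots,p-1\}$ (pairwise independence and uniformity are all that the computations actually need), so that both claims reduce to linearity-of-expectation arguments. I would state this modelling assumption explicitly at the outset, then handle the two parts in order, reusing the per-edge multiplicity counts already established in Lemma \ref{lm:mrgp_www_inner_outer}.

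For claim (1), I would fix an arbitrary triple $(i,j,k)$ and, for each edge $(u,v)\in E$, compute the probability that $(u,v)$ is a main edge of $G_{ijk}$. By definition this happens exactly when $P(u)\in\{i,j,k\}$ and $P(v)\in\{i,j,k\}$. Since each endpoint lands in the three-element set $\{i,j,k\}$ with probability $3/p$ by uniformity, and the two endpoints are hashed independently, this probability is $(3/p)^2=9/p^2$. Summing over the $m$ edges by linearity of expectation gives an expected main-edge count of $9m/p^2=O(m/p^2)$, which is claim (1).

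For claim (2), I would count exactly one emitted pair $\langle(u,v);S_I\rangle$ per incidence of a main edge with a subgraph that contains it (ignoring the convergence filter $0<d(u,v)<1$, which only reduces the count, so the bound holds in the worst case). The key input is Lemma \ref{lm:mrgp_www_inner_outer}: an inner edge is a main edge of $\tfrac{(p-1)(p-2)}{2}$ subgraphs, whereas an outer edge is a main edge of $p-2$ subgraphs. Writing $m_{\mathrm{in}}$ and $m_{\mathrm{out}}$ for the numbers of inner and outer edges, the total number of emitted pairs is exactly $m_{\mathrm{in}}\cdot\tfrac{(p-1)(p-2)}{2}+m_{\mathrm{out}}\cdot(p-2)$. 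Taking expectations, an edge is inner with probability $\Pr[P(u)=P(v)]=1/p$ and outer with probability $(p-1)/p$, so $E[m_{\mathrm{in}}]=m/p$ and $E[m_{\mathrm{out}}]=m(p-1)/p$. Substituting and simplifying yields $\tfrac{3m(p-1)(p-2)}{2p}=O(mp)$, establishing claim (2). As a cross-check, summing the part-(1) value $9m/p^2$ over all $\binom{p}{3}$ subgraphs produces the identical quantity $\tfrac{3m(p-1)(p-2)}{2p}$, which I would use to confirm the two routes agree.

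The only genuinely delicate point is the modelling assumption that $P$ behaves like independent uniform hashing, since every probability evaluation—$\Pr[P(u)\in\{i,j,k\}]=3/p$ and $\Pr[P(u)=P(v)]=1/p$ for $u\neq v$—rests on pairwise independence and uniformity of the hash; I would flag this assumption clearly. The remaining work is purely arithmetic: combining the inner and outer contributions and observing that the $\Theta(p^2)$-multiplicity inner edges (of which there are only $\Theta(m/p)$ in expectation) and the $\Theta(p)$-multiplicity outer edges (of which there are $\Theta(m)$) each contribute $\Theta(mp)$, so neither term dominates the asymptotic order.
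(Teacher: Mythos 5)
Your proposal is correct. Part (1) coincides with the paper's own argument: both compute the probability $9/p^2$ that both endpoints of an edge hash into $\{i,j,k\}$ and apply linearity of expectation over the $m$ edges. For part (2) you take a transposed counting route. The paper groups the edge--subgraph incidences by subgraph: it multiplies the expected number of main edges per subgraph, $O(m/p^2)$, by the $O(p^3)$ subgraphs to conclude $O(p^3\cdot \tfrac{m}{p^2})=O(mp)$. You instead group by edge, invoking Lemma~\ref{lm:mrgp_www_inner_outer} to attach to each edge its deterministic multiplicity ($\tfrac{(p-1)(p-2)}{2}$ if inner, $p-2$ if outer) and then taking expectations over the inner/outer split, $E[m_{\mathrm{in}}]=m/p$ and $E[m_{\mathrm{out}}]=m(p-1)/p$, which gives the exact total $\tfrac{3m(p-1)(p-2)}{2p}=O(mp)$; your cross-check that summing $9m/p^2$ over the $\binom{p}{3}$ subgraphs gives the same quantity is a valid consistency test of the two groupings. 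What your route buys: an exact constant rather than a loose big-O product, the structural observation that the few high-multiplicity inner edges and the many low-multiplicity outer edges each contribute $\Theta(mp)$, explicit handling of the convergence filter $0<d(u,v)<1$ (which only decreases the count), and an explicit statement of the uniform independent hashing assumption that the paper leaves implicit. What the paper's route buys is brevity: its part (2) follows in one line from part (1) and the number of subgraphs, without re-invoking Lemma~\ref{lm:mrgp_www_inner_outer}.
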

\begin{proof}
	(1) The probability that $P(u) \in \{i,j,k\}$ is $O({3\over p})$. An edge $(u,v)$ in $G_{ijk}$ is a main edge if  $P(u)\in \{i,j,k\}$ and $P(v)\in \{i,j,k\}$. Therefore, the likelihood that an edge appears between $u$ and $v$ is $9\over p^2$, resulting in the expected number of main edges of $G_{ijk}$ is $O({m\over p^2})$, where $m$ is total number of edges in the original graph $G(V,E)$.
	
	(2) For each main edge $(u,v)$ of $G_{ijk}$, after computing partial value of DI, CI and EI, we emit one key-value pair $\big \langle (u,v); S_I \big \rangle $. Since the number of main edges is $O({m \over p^2})$ and the number of subgraphs $G_{ijk}$ is $O(p^3)$, the expected number of key-value pairs $\big \langle (u,v); S_I \big \rangle$ is $O(p^3 \cdot {m \over p^2}) = O(mp)$.
\end{proof}
Based on Lemma \ref{lm:complexity_analysis}, we approximately estimate complexity of computing dynamic interactions in each subgraph $G_{ijk}$.
For each main edge $(u,v)$ of $G_{ijk}$, time complexity of computing direct interaction is O(1). Time complexity to compute common interaction is O($deg(u)+deg(v)$). Time complexity of computing exclusive interaction is about O($T\cdot(deg(u)+deg(v))$) where T is average number of exclusive neighbors of each node. Therefore, for all main edges of $G_{ijk}$, the total complexity is about $T\cdot\sum_{(u,v)\in E_{ijk}}(deg(u)+deg(v) \leq T\cdot \sum_{u\in V_{ijk}}deg^2(u)$. It has been proved in \cite{de1998upper} that $\sum_{u\in V}deg^2(u) \leq m \big( { {2m} \over {n - 1} } + n - 2 \big)$ for the original graph $G(V,E)$. Then, what is the upper bound of $\sum_{u\in V_{ijk}}deg^2(u)$ for the subgraph $G_{ijk}$ which is smaller than $G(V,E)$? The expected number of vertices of $G_{ijk}$ is $O(n/p)$, so we can roughly estimate the upper bound of $T\cdot\sum_{u\in V_{ijk}}deg^2(u)$ as $O\big({mnT\over p} \big( { {2m} \over {n - 1} } + n - 2 \big)\big)$.

\begin{table*}[t]
	\centering
	\resizebox{1.0\linewidth}{!}
	{
		\begin{tabular}{|l|llll|llll|llll|llll|}
			\hline
			\multirow{2}{*}{} & \multicolumn{4}{c|}{Karate}   & \multicolumn{4}{c|}{Football} & \multicolumn{4}{c|}{Polbooks} & \multicolumn{4}{c|}{Amazon}          \\ \cline{2-17}
			& Purity & NMI   & ARI   & \#iters & Purity & NMI   & ARI   & \#iters & Purity & NMI   & ARI   & \#iters & Purity & NMI   & ARI   & \#iters        \\ \hline
			Attractor         & 1.000  & 0.924 & 0.939 & 13   & 0.930  & 0.924 & 0.888 & 9    & 0.857  & 0.589 & 0.680 & 16   & 0.978  & 0.960 & 0.580 & 62          \\
			IAttractor        & 0.529  & 0.000 & 0.000 & 6    & 0.783  & 0.638 & 0.846 & 7    & 0.467  & 0.000 & 0.000 & 7    & 0.716  & 0.846 & 0.033 & 8           \\ \hline
			{[}0.5-10{]}      & 1.000  & 0.924 & 0.939 & 11   & 0.930  & 0.924 & 0.888 & 9    & 0.857  & 0.589 & 0.680 & 13   & 0.978  & 0.960 & 0.580 & 18 $\downarrow$70.8\% \\
			{[}0.7-10{]}      & 1.000  & 0.924 & 0.939 & 12   & 0.930  & 0.924 & 0.888 & 9    & 0.857  & 0.589 & 0.680 & 15   & 0.978  & 0.960 & 0.580 & 25 $\downarrow$67.0\% \\ \hline
		\end{tabular}
	}
	\vspace{2pt}
	\caption{Performance of Attractor and sliding window on graphs with labels.}
	\label{tbl:slidingwindows_attractor_labeled_dataset}
	\vspace{-12pt}
\end{table*}

\begin{table*}[t]
	\centering
	\resizebox{1.0\linewidth}{!}
	{
		\begin{tabular}{|l|llll|llll|llll|llll|}
			\hline
			& \multicolumn{4}{c|}{Friendship}               & \multicolumn{4}{c|}{Amazon}                  & \multicolumn{4}{c|}{Collaboration}            & \multicolumn{4}{c|}{Road}                     \\ \cline{2-17}
			& modul & ncut  & \#coms & \#iters              & modul & ncut  & \#coms & \#iters             & modul & ncut  & \#coms & \#iters              & modul & ncut  & \#coms & \#iters              \\ \hline
			Attractor & 0.421 & 0.607 & 8044   & 323                  & 0.741 & 0.398 & 23822  & 62                  & 0.337 & 0.159 & 785    & 43                   & 0.865 & 0.264 & 56967  & 37                   \\ \hline
			0.5-10    & 0.347 & 0.606 & 7939   & 19  $\downarrow$35\% & 0.741 & 0.398 & 23800  & 18 $\downarrow$71\% & 0.337 & 0.158 & 784    & 17 $\downarrow$10\%  & 0.865 & 0.264 & 56952  & 18 $\downarrow$16\%  \\ \hline
			0.5-15    & 0.424 & 0.606 & 8022   & 23 $\downarrow$28\%  & 0.741 & 0.398 & 23820  & 23 $\downarrow$68\% & 0.337 & 0.158 & 784    & 23 $\downarrow$3.5\% & 0.865 & 0.264 & 56966  & 22 $\downarrow$14\%  \\ \hline
			0.7-10    & 0.416 & 0.606 & 7991   & 28 $\downarrow$36\%  & 0.741 & 0.398 & 23802  & 25 $\downarrow$69\% & 0.337 & 0.158 & 784    & 21 $\downarrow$9.4\% & 0.865 & 0.264 & 56952  & 21 $\downarrow$16\%  \\ \hline
			0.7-15    & 0.424 & 0.607 & 8034   & 35 $\downarrow$23\%  & 0.741 & 0.398 & 23821  & 31 $\downarrow$67\% & 0.337 & 0.158 & 784    & 27 $\downarrow$3.3\% & 0.865 & 0.264 & 56966  & 25 $\downarrow$12\%  \\ \hline
		\end{tabular}
	}
	\vspace{2pt}
	\caption{Performance of Attractor and sliding window on graphs without labels. }
	\label{tbl:slidingwindows_attractor_non_labeled_dataset}
	\vspace{-20pt}
\end{table*}

\subsection{Extracting communities.} When each of all edge distances converges to either 0 or 1, we remove edges with distance equal to 1. Then, we find communities as connected components by running the breath first search on the master node.

\section{Experiments}
\label{sec:experiments}

In this section, we evaluate performance of our sliding window model on real-world datasets. Then, we evaluate performance of MRAttractor on both synthetic and real-world datasets. Our experiments mainly focus on evaluating the efficiency because Attractor \cite{kdd.dyna} outperformed other community detection algorithms such as Spectral clustering, Louvain and Infomap, achieving higher NMI and ARI \cite{manning2008introduction}.

\subsection{Performance of Applying Sliding Window to Attractor}

\begin{table}
	\centering
	\resizebox{1.0\linewidth}{!}
	{
		\begin{tabular}{|l|l|l|l|l|l|}
			\hline
			\multicolumn{1}{|c|}{Datasets} & \multicolumn{1}{c|}{$|$V$|$} & \multicolumn{1}{c|}{$|$E$|$} & $|$classes$|$ & \multicolumn{1}{c|}{AVD} & \multicolumn{1}{c|}{CC} \\ \hline
			1M                             & 20,000                   & 1,000,310                & 191     & 100.031                  & 0.699                   \\ \hline
			2M                             & 40,000                   & 1,994,815                & 387     & 99.741                   & 0.692                   \\ \hline
			4M                             & 80,000                   & 3,996,488                & 761     & 99.912                   & 0.707                   \\ \hline
			6M                             & 120,000                  & 5,994,313                & 1137    & 99.905                   & 0.700                   \\ \hline
			8M                             & 160,000                  & 7,977,113                & 1512    & 99.713                   & 0.697                   \\ \hline
		\end{tabular}
	}
	\vspace{2pt}
	\caption{Synthetic networks}
	\label{tbl:tbl-fixed-100}
	\vspace{-15pt}
\end{table}


To evaluate performance of applying our sliding window to Attractor and whether the sliding window reduce the community detection quality, we utilized all the datasets employed in \cite{kdd.dyna}, including datasets with or without ground truth (knowing the number of true communities or not). Since these datasets were introduced in \cite{kdd.dyna}, we omit their descriptions in this paper. Table~\ref{tbl:sliding_window_datasets} presents statistics of the datasets. For labeled datasets, we report well-known measures - Purity, NMI and ARI \cite{manning2008introduction}, and \# of iterations (\#iters) to make algorithms converged. For unlabeled datasets, we report normalized cut (Ncut) \cite{shi2000normalized}, modularity \cite{blondel2008fast}, \# of extracted communities (\#coms), \# of iterations (\#iters), and running time reduction. 

We compared Attractor with our sliding window (SAttractor) with the original Attractor and IAttractor \cite{meng2016improved}, an improved version of Attractor to make edge distances converged quickly. We set $\lambda=0.5$ for all methods, following \cite{kdd.dyna}. For IAttractor, we used exactly same parameter settings in \cite{meng2016improved} such as enhanced factor $\delta=1$ and $Con\_Co$=0.99 (making edges converged when the proportion of converged edges was greater than $Con\_Co$ threshold). For sliding window, we set $\tau=\{0.5, 0.7\}$ since it is more reliable to reflect the converging trend of edges, and set $s=\{10, 15\}$.




Table \ref{tbl:slidingwindows_attractor_labeled_dataset} presents the results of Attractor, IAttractor and sliding window model with $s=10$ and $\tau\in\{0.5,0.7\}$ in labeled datasets. We only report results of $s=10$ because these datasets require a few iterations to converge and sliding window only take effect after $s$ iterations. The experimental results showed that (1) SAttractor achieved the same quality of extracted communities with Attractor; (2) The number of iterations took in SAttractor was always equal to or smaller than the number of iterations took in Attractor; and (3) IAttractor failed producing the same quality of communities. We investigated the reason why IAttractor performed poorly in terms of the quality, and found that factor $\delta$ increased dynamic interactions too much, and made most edges converged to zero.

\begin{table}
	\centering
	\resizebox{1.0\linewidth}{!}
	{
		\begin{tabular}{|l|l|l|c|l|l|}
			\hline
			\multicolumn{1}{|c|}{Datasets} & \multicolumn{1}{c|}{$|$V$|$} & \multicolumn{1}{c|}{$|$E$|$} & $|$classes$|$                      & \multicolumn{1}{c|}{AVD} & CC    \\ \hline
			DBLP                           & 317,080                      & 1,049,866                    & \multicolumn{1}{l|}{top5000} & 6.622                    & 0.632 \\ \hline
			Texas Road                     & 1,379,917                    & 1,921,660                    & unknown                           & 2.785                    & 0.047 \\ \hline
			Youtube                        & 1,134,890                    & 2,987,624                    & \multicolumn{1}{l|}{top5000} & 5.265                    & 0.081 \\ \hline
			Flixster                       & 2,523,386                    & 7,918,801                    & unknown                           & 6.276                    & 0.083 \\ \hline
		\end{tabular}
	}
	\vspace{2pt}
	\caption{The large-scale real-life networks. }
	\label{tbl:real-life-graphs-large}
	\vspace{-15pt}
\end{table}

In small datasets such as Karate, Football and Polbooks, we might observe minor improvement by applying the sliding window in terms of running time. However, in Amazon dataset, we observed huge improvement by reducing up to 70.8\% running time (reducing from 885 to 258 seconds). 

Table~\ref{tbl:slidingwindows_attractor_non_labeled_dataset} shows experimental results in unlabeled datasets. Note that IAttractor failed producing the same quality of communities again, so we only report results of Attractor and SAttractor. When we keep track up only the last 10 statues of edges (i.e., $s=10$), the quality of extracted communities in Friendship dataset was worse than the other sliding window settings. In other datasets, the quality was once again consistent with Attractor. When increasing $s$ to 15 and 20, the quality of extracted communities in all datasets were consistent with those found by Attractor. Besides, \#iterations was significantly reduced. In particular, \#iterations in Friendship dataset decreased from 323 to only 23 when we set $s=15$ and $\tau=0.5$, achieving 28\% running time reduction. 

\begin{figure*}[t]
	\centering
	\subfigure[Running time on synthetic graphs]{
		\label{fig:all_synthetic_network}
		\includegraphics[trim=5 50 100 60,clip,width=1.6in, height=1.1in]{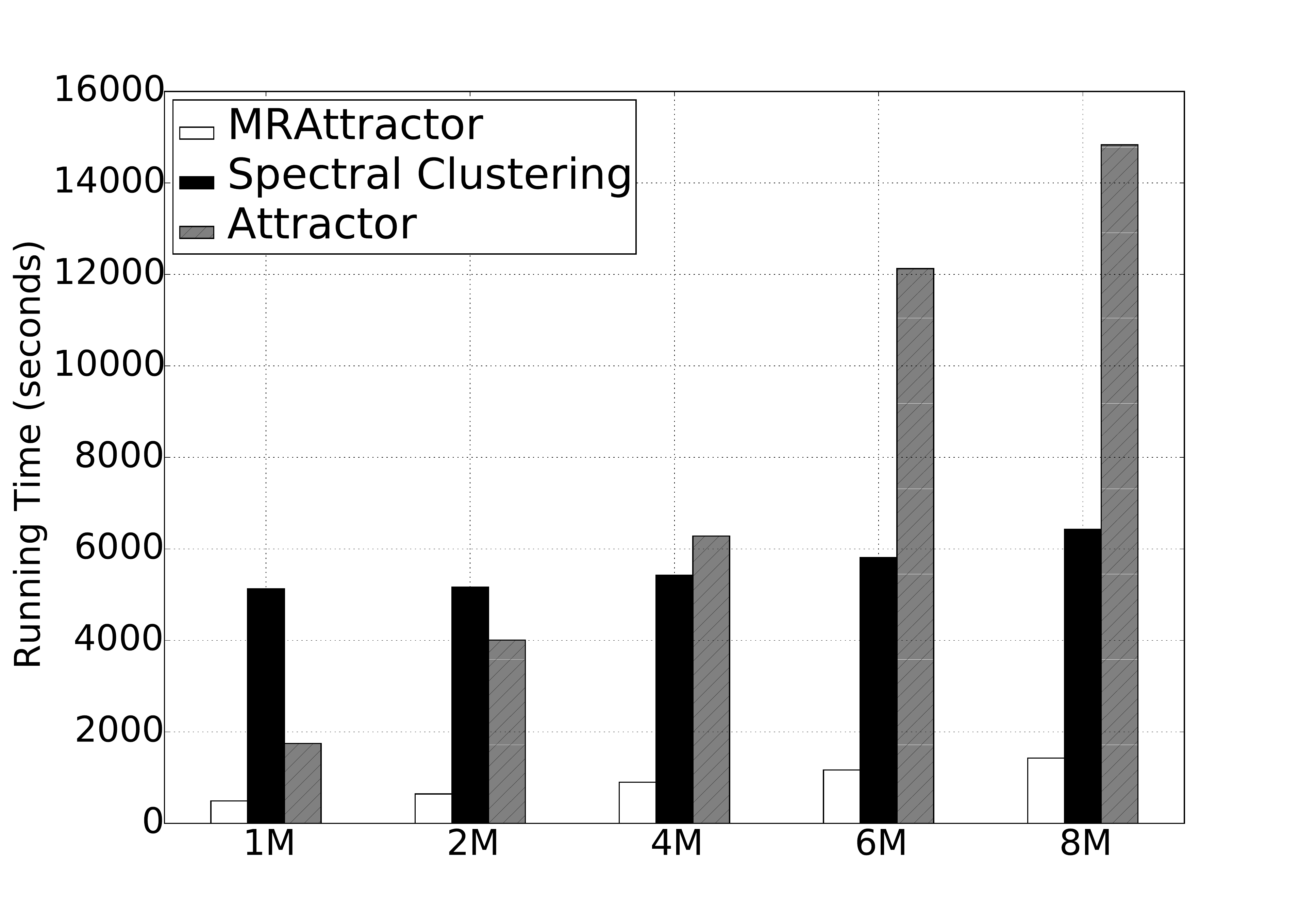}
	}
	\subfigure[Running time of the 1st iteration]{
		\label{fig:firstIterSynthetic}
		\includegraphics[trim=5 50 100 60,clip,width=1.6in, height=1.1in]{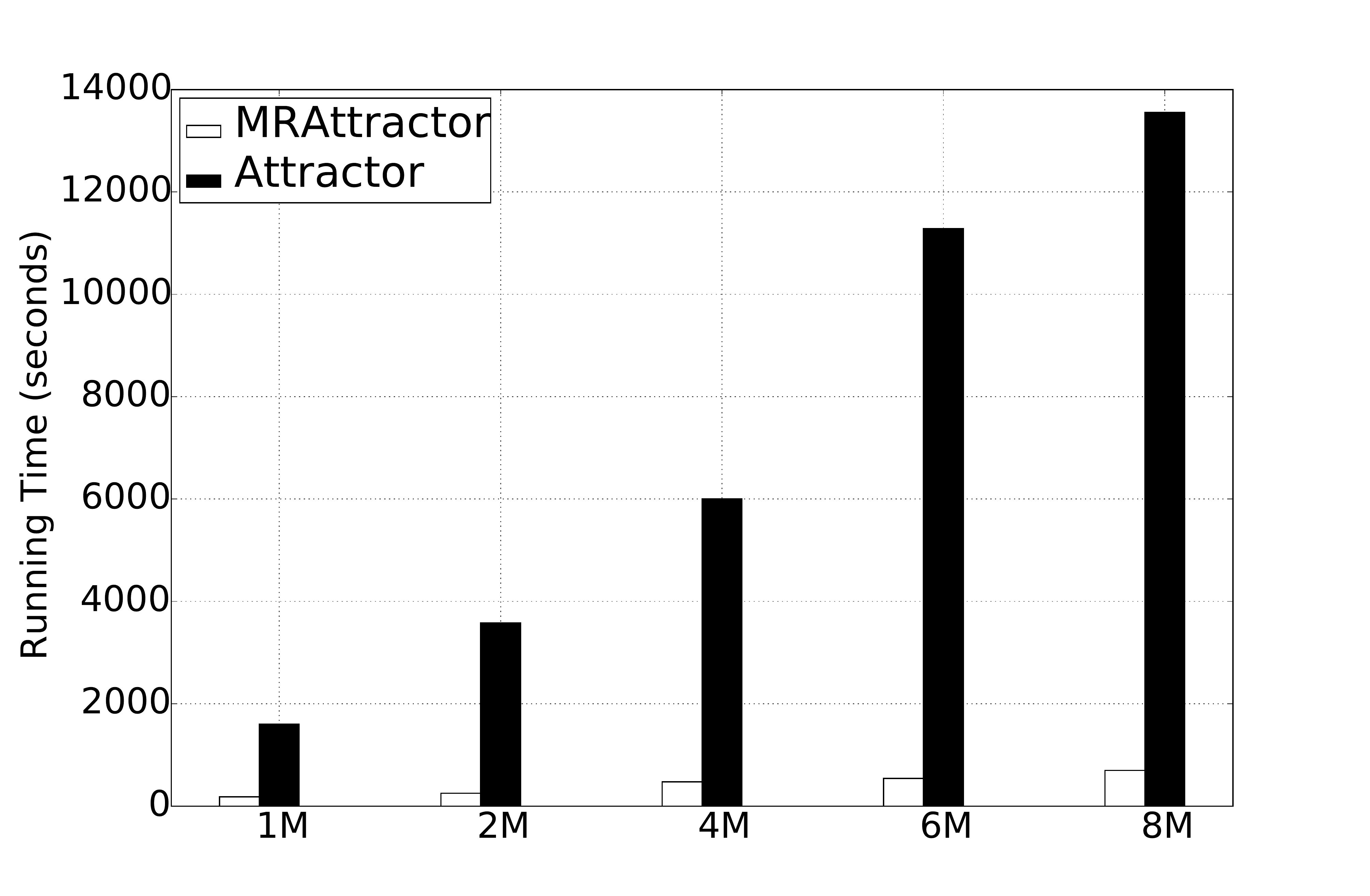}
	}
	\subfigure[Running time on real-life graphs]{
		\label{fig:all_real_life_dataset}
		\includegraphics[trim=5 55 100 70,clip,width=1.6in, height=1.1in] {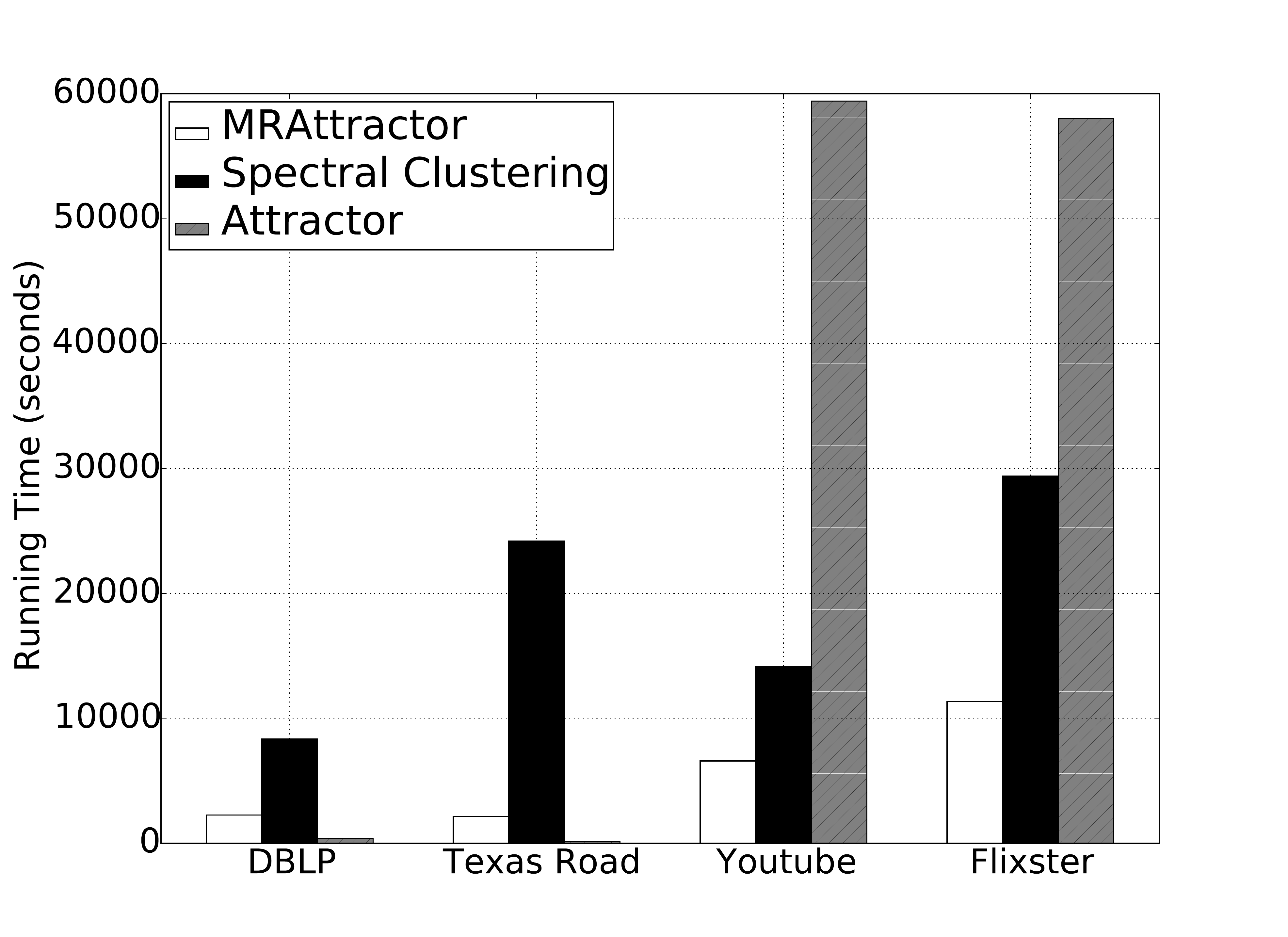}
	}
	\subfigure[Running time of the 1st iteration]{
		\label{fig:firstIterReal}
		\includegraphics[trim=5 55 100 70,clip,width=1.6in, height=1.1in]{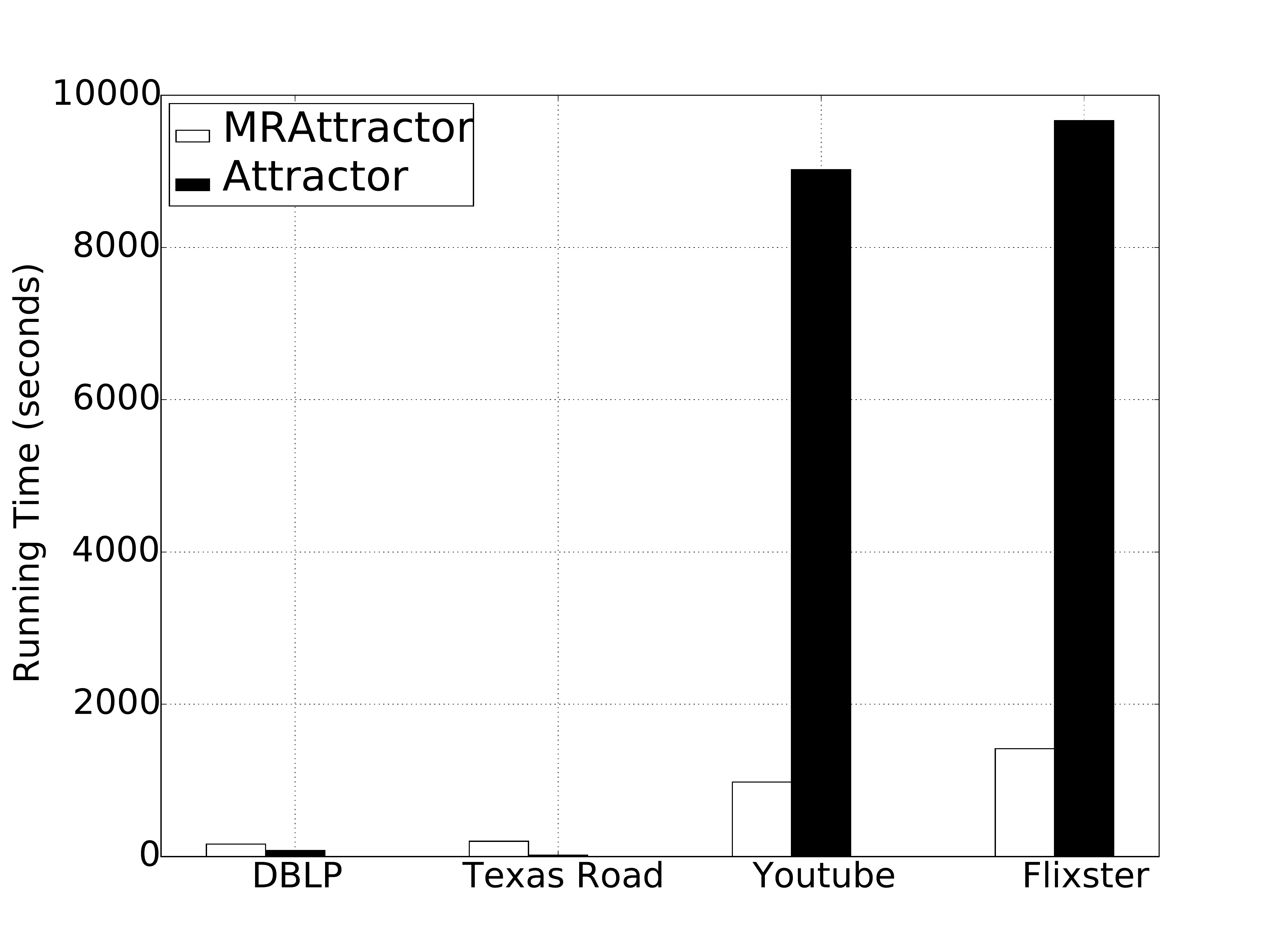}
	}
	\vspace{-5pt}
	\caption[]{Running time of MRAttractor with single-machine Attractor and Spectral Clustering algorithm on simulated networks and real-life graphs.}
	\label{fig:mrPerformance}
	\vspace{-15pt}
\end{figure*}

\subsection{Performance of \textsc{MRAttractor}}
So far, we observed that our sliding window model in a single machine environment reduced number of iterations and the running time, while preserved the quality of extracted communities. Next, we turn to examine MRAttractor's performance in large-scale graphs. As we described in the previous section, MRAttractor has several enhancements including the sliding window against Attractor. In the following experiments, we used both synthetic and real-life datasets/networks.

\subsubsection{Experimental settings}
We created five synthetic graphs presented in Table~\ref{tbl:tbl-fixed-100} by using Fortunato's benchmarking tool \cite{lancichinetti2008benchmark} for community detection. This tool was also employed in the prior work such as \cite{kdd.dyna, tsironis2013accurate, wickramaarachchi2014fast}. According to Twitter and Facebook statistics \cite{avg.twitter.follower,avg.facebook.friends}, users have 100s followers or friends on average. Therefore, when we generated synthetic datasets, we set average degree to 100. Table~\ref{tbl:real-life-graphs-large} shows statistics of large real-life graphs. 
\textbf{DBLP} is a graph of co-authorship where two authors are connected if they co-authored at least one paper. \textbf{Texas Road} is a road network where nodes are intersections and endpoints. Edges are roads connecting them in Texas. \textbf{Youtube} dataset is a snapshot of friendship graph on Youtube. If two users are friends, they will be connected. \textbf{Flixster} is the friendship network of movie site Flixster.com where two users are linked if they are friends.



We implemented \textsc{MRAttractor} on the top of Hadoop and ran experiments on a cluster which consisted of one master node and five slave nodes, each of which had 8 cores. We set the number of reducers to 30 and used sliding window with $s=15$ and $\tau=0.5$ since these settings achieved  the fastest running time while the quality of extracted communities was good in the previous experiment. The other parameters were $\lambda=0.5$, $\gamma=10000$, and the number of partitions $p=20$. The hashing function $P(\cdot)$ was modulo function.

We compared \textsc{MRAttractor} with two baselines:
\squishlist
\item \textbf{Attractor:} We implemented Attractor \cite{kdd.dyna} in Java and ran it on a computer with 64GB RAM and Intel Xeon 8 cores 2.10GHz. The cohesive parameter was $\lambda=0.5$.
\item \textbf{Spectral Clustering for MapReduce:} It was a MapReduce version of the spectral clustering proposed in \cite{tsironis2013accurate}. We set 20 iterations for finding eigenvectors and 30 iterations for K-means, following the same settings in \cite{tsironis2013accurate}.
\squishend

We ran \textsc{MRAttractor} and the baselines multiple times, and achieved consistent results.

\subsubsection{Experimental Results}

Figure~\ref{fig:mrPerformance} shows performance of MRAttractor and baselines in synthetic and real-life networks. In the synthetic networks (see Figure~\ref{fig:all_synthetic_network}), MRAttractor was significantly faster than Attractor and Spectral clustering. In particular, MRAttractor was 3.56$\sim$10.39 times faster than Attractor, and 4.50$\sim$10.45 times faster than Spectral clustering. Note that, we also generated synthetic graphs with average degree=200, and observed that MRAttractor was much faster than Attractor and Spectral clustering.

In real-life networks, MRAttractor performed worse than Attractor, when the datasets such as DBLP and Texas Road were small, and their average graph density was small as well. It makes sense because iterative MapReduce algorithms usually suffer overhead of network I/O and disk I/O \cite{jiang2010performance}. However, when MRAttractor dealt with larger datasets such as Youtube and Flixster, it performed significantly faster than the baselines. In particular, it was 5.12$\sim$9.02 times faster than Attractor (see Figure \ref{fig:all_real_life_dataset}). Attractor had 1,099 and 1,513 iterations to converge in Youtube and Flixster networks, respectively whereas MRAttractor only had 22 and 25 iterations in Youtube and Flixster, respectively.

In addition, Figures~\ref{fig:firstIterSynthetic} and~\ref{fig:firstIterReal} shows running time of the first iteration of both Attractor and MRAttractor in synthetic and real-life networks. The first iteration takes the longest time among all the iterations. Again, MRAttractor was much faster than Attractor across all the synthetic networks, and Youtube and Flixster. As we can observe in 1M and 2M synthetic networks, even though graph sizes in 1M and 2M synthetic networks were relatively small, as long as their average graph density was high, MRAttractor would be faster than Attractor (unlike DBLP and Texas Road networks which had the same edge size but much smaller average graph density).

\begin{figure}[h]
	\centering
	\subfigure[Speedup in the first 15 iterations]{
		\label{fig:speedup_15iter}
		\includegraphics[trim=5 50 90 60,clip,width=1.6in, height=1.1in]{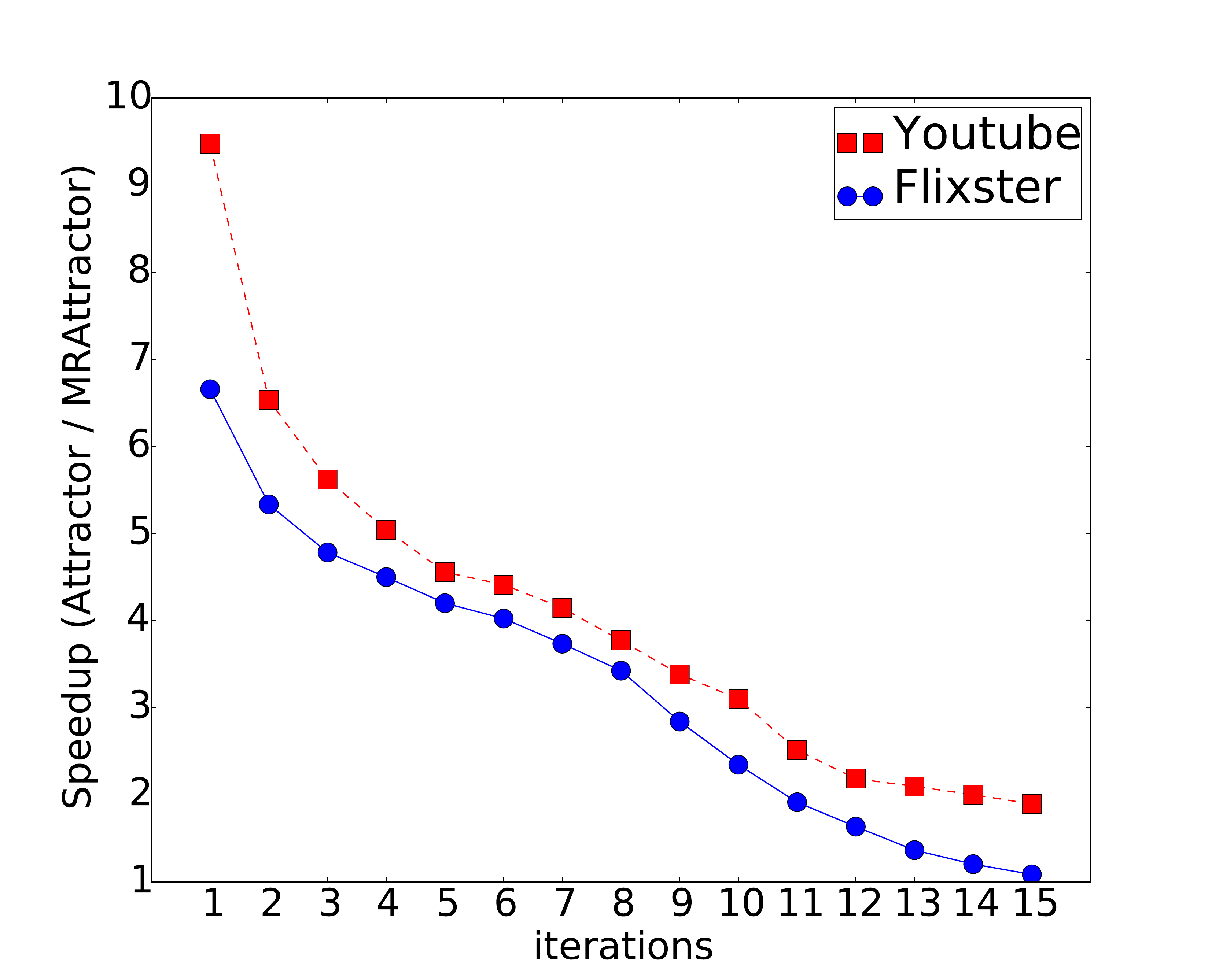}
	}
	\subfigure[Running time in each phase]{
		\label{fig:running_time_each_phase}
		\includegraphics[trim=5 45 90 55,clip,width=1.6in, height=1.1in]{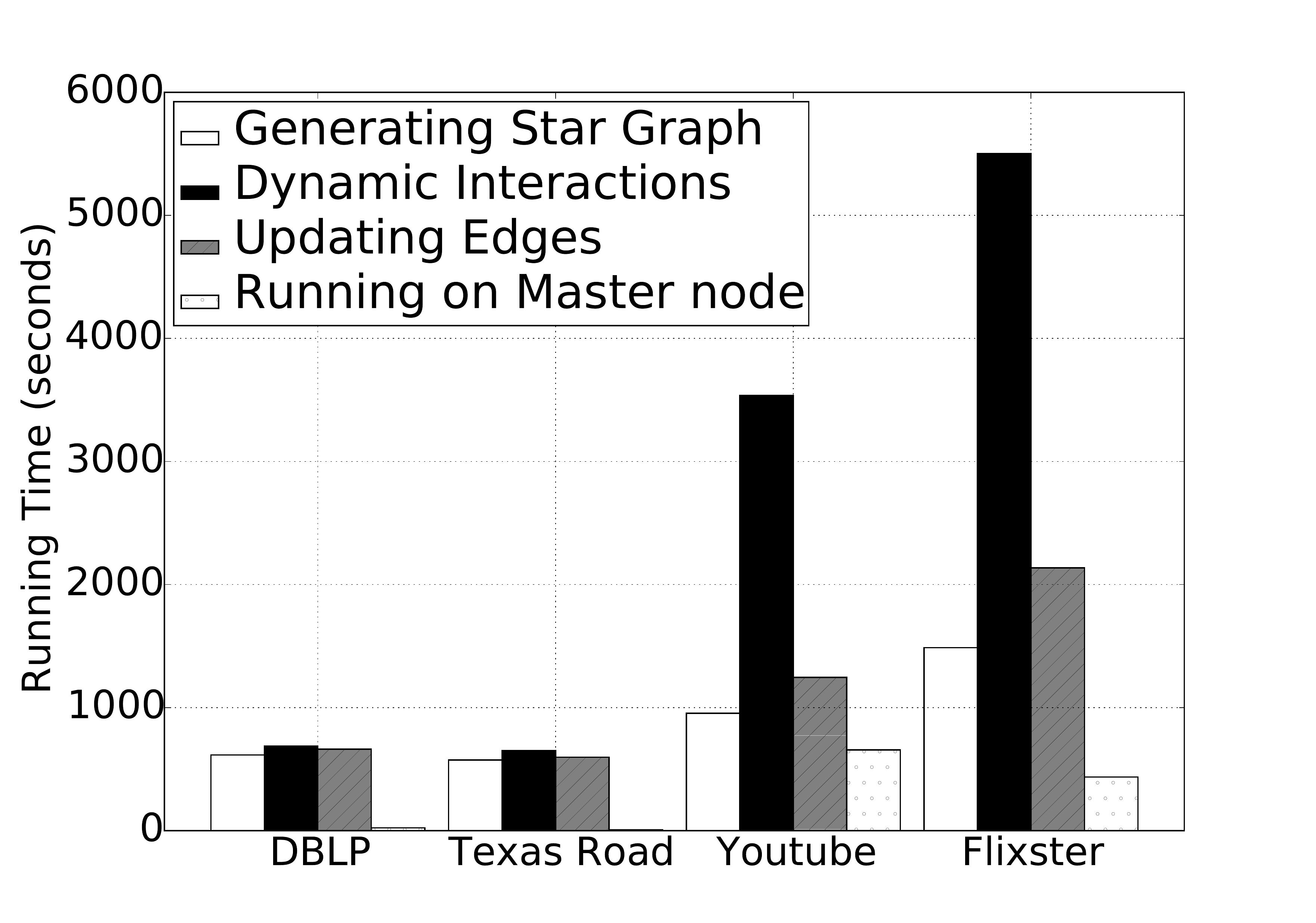}
	}
	\caption[]{Speedup of MRAttractor over Attractor (left) and running time of MRAttractor (right).}
\end{figure}

Figure~\ref{fig:speedup_15iter} shows how much speedup MRAttractor made over Attractor in Youtube and Flixster datasets in the first 15 iterations. 
MRAttractor was  9.5 and 6.8 times faster than Attractor in the first iteration. In later iterations, the speedup gradually decreased because many edges converged over time.

Next, we examine running time of each of the three MapReduce phases and remaining edge distance convergence on master node presented in Figure~\ref{fig:MRAttractor_major_components}. Figure~\ref{fig:running_time_each_phase} shows details of running time in each phase. In small datasets such as DBLP and Texas Road, running time of MapReduce phases were quite similar. However, in larger networks, running time of computing dynamic interactions became dominant due to high complexity of this step. We also observed that running time of remaining edge convergence step increased as the graph size increased even though we set small $\gamma=10,000$.

\subsubsection{Number of Graph Partitions and Reducers}
We are interested in measuring the sensitivity of the number of graph partitions $p$ in terms of running time. Figure~\ref{fig:sensitivity_no_partition} shows how running time in Flixster graph (the largest real-life dataset in this paper) was changed, when we varied $p\in[5;25]$. Given a small $p$, extracting communities was slow since each of subgraphs contained a large number of vertices and edges. As $p$ increased, the running time decreased significantly until $p$ was 14. When $p$ was greater than 14, it took longer running time because too many subgraphs were generated, leading to performance deterioration.

\begin{figure}
	\centering
	\subfigure[Effect of $p$ on running time.]{
		\label{fig:sensitivity_no_partition}
		\includegraphics[trim=5 10 90 55,clip,width=1.6in, height=1.2in]{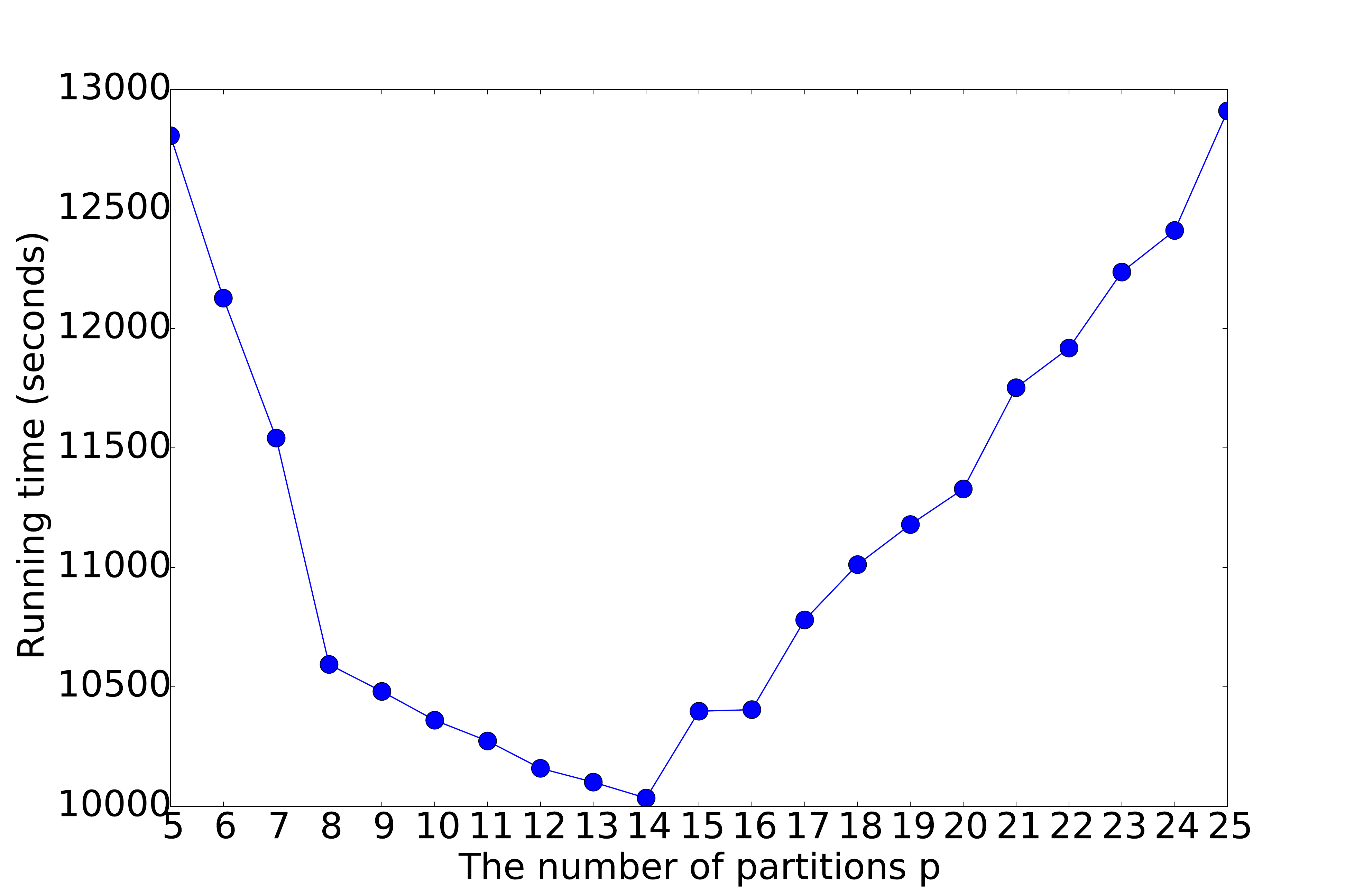}
	}
	\subfigure[Varying the number of reducers]{
		\label{fig:vary_reducers}
		\includegraphics[trim=5 10 90 55,clip,width=1.6in, height=1.2in]{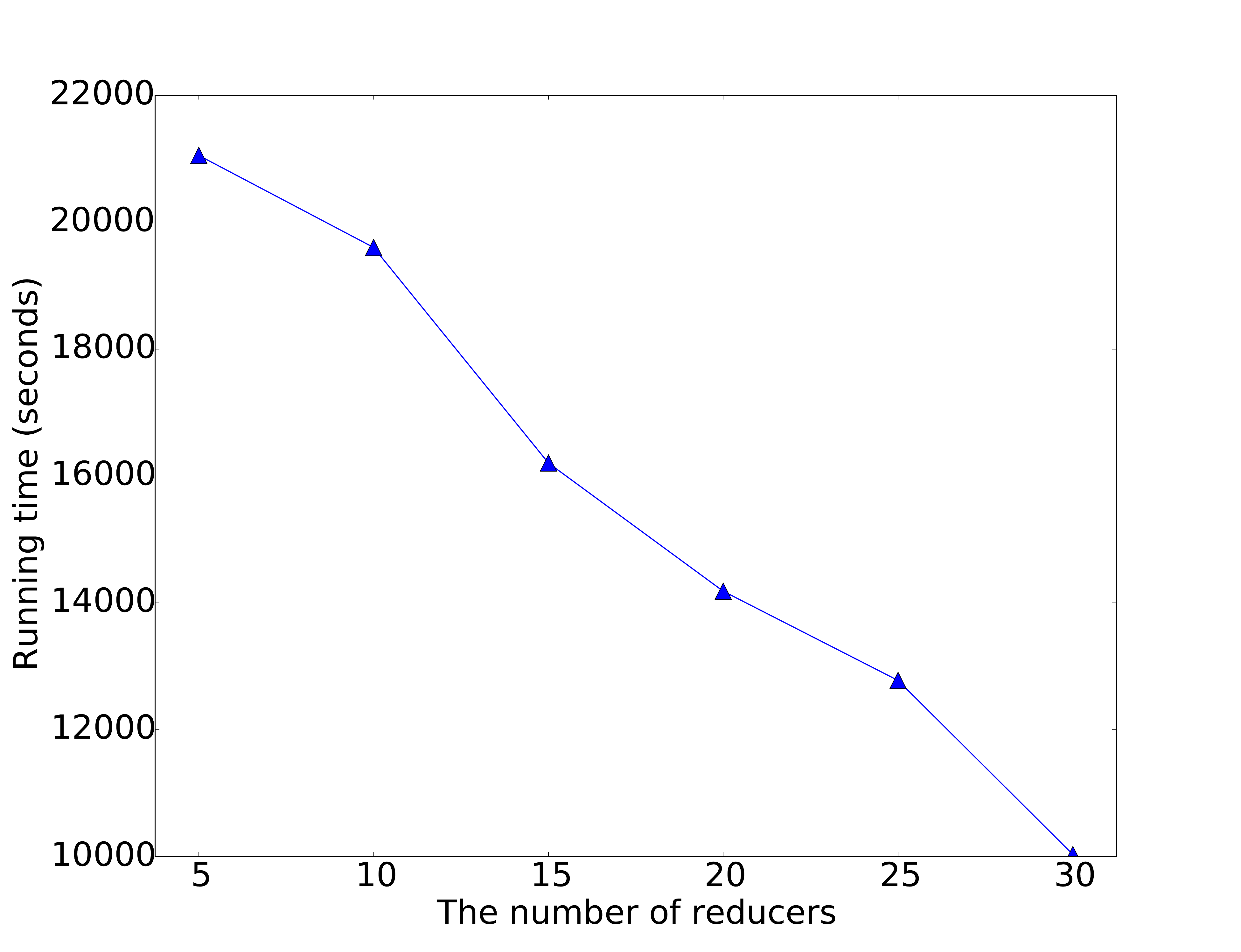}
	}
	\vspace{-5pt}
	\caption[]{Sensitivity of the number of graph partitions and reducers.}
	\vspace{-15pt}
\end{figure}
Another interesting question is how the number of reducers affects performance of MRAttractor. To answer the question, we set $p=14$,  and varied the number of reducers on Flixster dataset. Figure~\ref{fig:vary_reducers}, the running time decreased linearly as we increased the number of reducers in our Hadoop cluster.


Overall, MRAttractor not only produced the same quality of extracted communities like Attractor (we also ran other experiments to confirm MRAttractor correctly produced the same communities like Attractor), but also significantly faster than Attractor and a MapReduce version of Spectral Clustering in the large datasets or high graph density in small datasets.

\section{Conclusions and Future Work}
\label{sec:conclusion}
In this paper, we have presented how we have designed and implemented MRAttractor, an advanced version of Attractor for a MapReduce framework, Hadoop. Our proposed framework has handled large-scale graphs and significantly outperformed Attractor, IAttractor and Spectral Clustering, reducing running time and producing the same quality of extracted communities. In the future, we will implement MRAttractor for Spark and focus on handling large-scale time-evolving networks. 
Our graph partitioning algorithm has an advantage over existing graph processing frameworks when tackling exclusive interactions. In Pregel \cite{malewicz2010pregel}, for example, a vertex cannot directly communicate with an exclusive neighbor $x$ without knowing $x$'s identifier. GraphX \cite{gonzalez2014graphx}, even does not support communication of unconnected nodes. Thus, when designing algorithms that require exclusive interactions, our method can be customized to meet this demand.
\section*{Acknowledgment}
This work was supported in part by NSF grants CNS-1553035 and CNS-1755536, and Google Faculty Research Award. Any opinions, findings and conclusions or recommendations expressed in this material are the author(s) and do not necessarily reflect those of the sponsors.

\bibliographystyle{IEEEtran}
\bibliography{www}

\end{document}